\newcommand{\dashint}{%
  \mathop{%
    \int
  }}
\newcounter{thm}
\newcounter{ex}
\newcounter{re}
\newtheorem{theorem}[thm]{Theorem}
\newtheorem{lemma}[thm]{Lemma}
\newtheorem{corollary}[thm]{Corollary}
\newtheorem{proposition}[thm]{Proposition}
\newtheorem{remark}[thm]{Remark}
\newtheorem{definition}[thm]{Definition}
\begin{document}

\title[$W^{s,p}$ loop spaces]{Poisson structures on weak Sobolev loop spaces and applications to integrable systems}

\author{Jean-Pierre Magnot}

\address{{SFR MATHSTIC, LAREMA, Universit\'e d’Angers, 2 Bd Lavoisier, 
49045 Angers cedex 1, France;  Lyc\'ee Jeanne d'Arc, 40 avenue de Grande Bretagne, 63000 Clermont-Ferrand, 
France}; Lepage Research Institute, 17 novembra 1, 081 16 Presov, Slovakia}
\email{\small magnot@math.cnrs.fr; jean-pierr.magnot@ac-clermont.fr}

\begin{abstract}
We develop a framework for Poisson geometry on loop spaces of low regularity,
extending Mokhov’s classical constructions from smooth loops to weak Sobolev
spaces $W^{s,p}(\mathbb S^1;\mathbb R^m)$ with $0<s\le \tfrac12$ and $1<p<\infty$.
Within this setting we construct presymplectic and Poisson structures of
hydrodynamic type, as well as their weakly nonlocal deformations involving
inverse derivatives. The analytic backbone relies on the boundedness of
fractional multipliers, Hilbert transforms, and Lipschitz Nemytski{\u i}
operators on $W^{s,p}$, which ensures that all operations used in Mokhov’s
formalism remain well defined at this level of regularity. We further show that
the horizontal--vertical bicomplex underlying variational Poisson geometry can
be extended to $W^{s,p}$, so that cohomological arguments proving skew-symmetry
and the Jacobi identity carry over verbatim. As an application, we embed the
Hamiltonian formalisms of several integrable PDEs (KdV, nonlinear Schrödinger,
Camassa--Holm, and hydrodynamic systems of Dubrovin--Novikov type) into this
weak Sobolev setting. Local order-one Poisson operators and their weakly
nonlocal extensions are shown to be well posed for $W^{s,p}$ loops, while
higher-order operators (e.g.\ the second KdV bracket) require stronger
regularity. Our results provide a rigorous analytic foundation for Poisson
geometry on weak loop spaces and open the way for extending the Hamiltonian
theory of integrable systems beyond the smooth category.
\end{abstract}

\keywords{Poisson geometry; Loop spaces; Sobolev spaces; Weak regularity; Hamiltonian PDEs; Integrable systems; Weakly nonlocal operators.}

\maketitle

\noindent
\emph{MSC (2020): } 37K10; 37K25; 53D17; 46E35; 35Q53.

\noindent 
\emph{PACS numbers:} 02.30.Ik; 02.30.Jr; 02.40.Yy.

\tableofcontents

\section{Introduction}

The modern theory of integrable systems originates in the pioneering work of
Lax, Gardner--Greene--Kruskal--Miura, and Zakharov--Shabat, who revealed the
Hamiltonian and spectral structures underlying nonlinear dispersive equations
such as the Korteweg--de Vries (KdV) and the nonlinear Schrödinger (NLS)
equations. A crucial insight was that these PDEs could be interpreted as
infinite-dimensional Hamiltonian systems, with the circle $\mathbb S^1$ (or the
line $\mathbb R$) playing the role of the spatial domain. The systematic
construction of Hamiltonian hierarchies was developed by Magri’s bi-Hamiltonian
formalism \cite{Magri1978}, and the general theory of local Hamiltonian
operators of hydrodynamic type was introduced by Dubrovin and Novikov
\cite{DubrovinNovikov1983}. These works placed the geometry of loop spaces at
the very heart of integrability.

From a geometric viewpoint, the loop space $\mathcal L M = C^\infty(\mathbb S^1,M)$
of a smooth manifold $M$ carries canonical presymplectic and Poisson structures,
studied in depth by Faddeev--Takhtajan \cite{FaddeevTakhtajan1987} and later by
Mokhov \cite{Mokhov1998,MokhovBook}, who described a broad class of local and
weakly nonlocal Hamiltonian operators on loop spaces. The bicomplex of
horizontal and vertical differentials provides the natural algebraic framework
to establish skew-symmetry and the Jacobi identity for such operators, and
links integrable PDEs to variational bicohomology.

On the analytical side, integrable PDEs are strongly nonlinear, and their
solutions often develop singular features. The KdV equation admits solitons;
the Camassa--Holm equation exhibits peaked solitons (``peakons'') and wave
breaking; the NLS equation supports blow-up solutions. These phenomena naturally
lead to the study of solutions in \emph{weak functional spaces}. For instance,
global weak solutions of KdV exist in $H^{-3/4}(\mathbb R)$ and $H^s(\mathbb S^1)$
for $s\ge -1/2$ \cite{KappelerTopalov2006,KillipVisan2019}, while Camassa--Holm
has global weak solutions in $H^1$ and even in the space of measures
\cite{BressanConstantin2007}. More generally, the control of Hamiltonian
structures for solutions of low Sobolev regularity has become an essential
theme in the PDE literature.

The purpose of this paper is threefold:

\begin{enumerate}[label=\arabic*)]
  \item Extend Mokhov’s constructions of local and weakly non‑local Poisson structures from smooth loops to the weak Sobolev loop spaces
        \(
        W^{s,p}(\mathbb S^{1};\mathbb R^{m})
        \)
        with \(0<s\le\tfrac12\) and \(1<p<\infty\).
  \item Prove that the associated variational bicomplex \((d_{H},d_{V})\) remains well defined and satisfies the bicomplex identities in this setting.
  \item Show that the Hamiltonian formalisms of several classical integrable PDEs (KdV, NLS, Camassa–Holm, Dubrovin–Novikov hydrodynamic systems) embed naturally into the weak Sobolev framework, and that the bi‑Hamiltonian property can be expressed cohomologically as the vanishing of a class in the total bicomplex.
\end{enumerate}

The analytic backbone relies on harmonic‑analysis tools: boundedness of the Hilbert transform and of fractional multipliers on \(W^{s,p}\), continuity of Nemytskiĭ operators with Lipschitz coefficients, and the density of smooth loops in the relevant Sobolev spaces.  These ingredients guarantee that all operations appearing in Mokhov’s formalism survive at low regularity.
\section{Sobolev and BV loop spaces}

Let $0<s<1$, $1\leq p<\infty$. We denote by $W^{s,p}(\mathbb S^1;\mathbb R^2)$ the fractional Sobolev space of maps
\[
W^{s,p}(\mathbb S^1;\mathbb R^n)
=\Bigl\{ \gamma\in L^p(\mathbb S^1;\mathbb R^n)\;:\;
[\gamma]_{W^{s,p}}<\infty \Bigr\},
\]
where
\[
[\gamma]_{W^{s,p}}^p
=\iint_{\mathbb S^1\times\mathbb S^1}\frac{|\gamma(t)-\gamma(\tau)|^p}{|t-\tau|^{1+sp}}\,dt\,d\tau.
\]
For $s=\tfrac12$ and $p=2$, this is the Hilbert space $H^{1/2}$. For $p=1$, $s=1$, we recover $BV$ (bounded variation).
Let us recall that there esists two equivalent definitions of $W^{s,p}(\mathbb S^1).$
Let $0<s<1$ and let $1\le p\le\infty$. We denote by $\mathbb S^1=\mathbb R/2\pi\mathbb Z$
(with coordinate $t\in[0,2\pi)$) and by
\[
|e^{it}-e^{i\tau}| = \big(2-2\cos(t-\tau)\big)^{1/2} = 2\left|\sin\frac{t-\tau}{2}\right|
\]
the chordal distance on the unit circle.

\subsection{The Gagliardo (Slobodeckij) definition}

For $1\le p<\infty$, the (inhomogeneous) fractional Sobolev space
$W^{s,p}(\mathbb S^1)$ is
\[
W^{s,p}(\mathbb S^1)
=\Big\{u\in L^p(\mathbb S^1):\ [u]_{W^{s,p}}<\infty\Big\},\qquad
\|u\|_{W^{s,p}}:=\|u\|_{L^p}+ [u]_{W^{s,p}},
\]
where the Gagliardo seminorm is
\[
[u]_{W^{s,p}}^p
:=\iint_{\mathbb S^1\times\mathbb S^1}
\frac{|u(t)-u(\tau)|^p}{|e^{it}-e^{i\tau}|^{1+sp}}\,dt\,d\tau.
\]
(For $p=\infty$, one recovers Hölder--Zygmund spaces; we will not use this endpoint.)

\subsection{Fractional powers of the Laplacian on $\mathbb S^1$}

Write the Fourier series of $u\in\mathcal D'(\mathbb S^1)$ as
\(
u(t)=\sum_{k\in\mathbb Z}\widehat u_k\,e^{ikt}.
\)
Define
\[
|D|^s u := \sum_{k\in\mathbb Z} |k|^s\,\widehat u_k\,e^{ikt},\qquad
\Lambda^s u := (I-\Delta)^{s/2}u
=\sum_{k\in\mathbb Z} (1+k^2)^{s/2}\,\widehat u_k\,e^{ikt}.
\]
The operator $|D|=\sqrt{-\Delta}$ is the (homogeneous) fractional Laplacian,
and $\Lambda=(I-\Delta)^{1/2}$ the Bessel potential operator.

\begin{definition}[Spectral (Bessel) definition for $p=2$]
For $p=2$ and $s\in\mathbb R$, the Hilbert space $H^s(\mathbb S^1)$ is
\[
H^s(\mathbb S^1)=\Big\{u\in\mathcal D'(\mathbb S^1):\
\|u\|_{H^s}^2:=\sum_{k\in\mathbb Z}(1+k^2)^s\,|\widehat u_k|^2<\infty\Big\}.
\]
The \emph{homogeneous} seminorm is
\(
|u|_{\dot H^s}^2 := \sum_{k\in\mathbb Z}|k|^{2s}|\widehat u_k|^2
\)
(understood on mean-zero distributions to avoid the $k=0$ mode).
\end{definition}

\begin{definition}[Bessel potential spaces for $1<p<\infty$]
For $0<s<1$ and $1<p<\infty$, define
\[
H^{s,p}(\mathbb S^1):=\big\{u\in\mathcal D'(\mathbb S^1):\ \Lambda^s u\in L^p(\mathbb S^1)\big\},
\qquad \|u\|_{H^{s,p}}:=\|\Lambda^s u\|_{L^p}.
\]
This is the Triebel--Lizorkin space $F^s_{p,2}(\mathbb S^1)$.
\end{definition}

\subsection{Equivalence of the two definitions}

\subsection*{1) Case $p=2$: identity of quadratic forms}

\begin{proposition}[Quadratic form identity on $\mathbb S^1$]\label{prop:p2}
Let $0<s<1$. There exists a constant $c_{s,\mathbb S^1}\in(0,\infty)$ such that
for all $u\in C^\infty(\mathbb S^1)$ with mean zero,
\begin{equation}\label{eq:p2-identity}
\iint_{\mathbb S^1\times\mathbb S^1}
\frac{|u(t)-u(\tau)|^2}{|e^{it}-e^{i\tau}|^{1+2s}}\,dt\,d\tau
= c_{s,\mathbb S^1}\sum_{k\in\mathbb Z\setminus\{0\}} |k|^{2s}\,|\widehat u_k|^2.
\end{equation}
In particular, $[u]_{W^{s,2}}\asymp \||D|^s u\|_{L^2}$ and
$\|u\|_{H^s}^2=\|u\|_{L^2}^2+[u]_{W^{s,2}}^2$ with equivalent norms.
\end{proposition}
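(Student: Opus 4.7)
The approach is Fourier-analytic. Writing $u(t) = \sum_{k \neq 0} \widehat u_k e^{ikt}$ (the mean-zero hypothesis eliminates the constant mode) and expanding $|u(t) - u(\tau)|^2$, the orthogonality of $\{e^{int}\}$ on the product torus collapses the double integral to its diagonal; equivalently, freezing $\theta = t - \tau$ and applying Plancherel in the remaining variable to $\tau\mapsto u(\tau+\theta)-u(\tau)$ gives
\[
[u]_{W^{s,2}}^2 = 2\pi \sum_{k \neq 0} |\widehat u_k|^2\, \Lambda_s(k),
\qquad
\Lambda_s(k) := \int_0^{2\pi} \frac{|1 - e^{ik\theta}|^2}{|1 - e^{i\theta}|^{1+2s}}\, d\theta.
\]
Proposition~\ref{prop:p2} therefore reduces to the single scalar identity $\Lambda_s(k) = c_s\,|k|^{2s}$ for every $k \in \mathbb{Z}\setminus\{0\}$: once this is in hand, \eqref{eq:p2-identity} is immediate with $c_{s,\mathbb{S}^1}=2\pi c_s$, and the stated norm equivalence follows by Plancherel on $\|u\|_{L^2}^2$.

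For the scaling identity I would invoke the periodic hypersingular representation of the fractional Laplacian: for mean-zero $u$ and $0<s<1$,
\[
(|D|^{2s} u)(t) = \kappa_s \int_0^{2\pi} \frac{u(t) - u(t + \theta)}{|2\sin(\theta/2)|^{1+2s}}\, d\theta,
\]
the periodic analogue of the classical hypersingular formula on $\mathbb{R}$. Testing this formula on the basis $e^{ikt}$ simultaneously verifies that both sides are eigenfunctions and pins down $\kappa_s$; symmetrising $\langle |D|^{2s}u,u\rangle_{L^2}$ in $(t,\tau)$ then reproduces the Gagliardo quadratic form, with $c_{s,\mathbb{S}^1}$ proportional to $1/\kappa_s$.

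The concrete computation of $\Lambda_s(k)$ is where the real work sits: unlike on $\mathbb{R}$, where the dilation $r = k\theta$ yields $|k|^{2s}$ for free, the chordal kernel $|2\sin(\theta/2)|^{-(1+2s)}$ on $\mathbb{S}^1$ is \emph{not} scale-invariant, so exact proportionality is not a consequence of dimensional analysis alone. I would handle it via the Chebyshev/Fej\'er identity $(\sin k\phi/\sin\phi)^2 = \sum_{|m|<k}(k-|m|)\,e^{2im\phi}$ after the substitution $\phi = \theta/2$, which converts $\Lambda_s(k)$ into a weighted finite sum of the classical Beta-type integrals $\int_0^\pi \sin^{1-2s}\phi\,\cos 2m\phi\,d\phi$; their explicit $\Gamma$-function evaluation telescopes cleanly to $c_s|k|^{2s}$. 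As a sanity check, the case $s=1/2$ collapses to the classical Fej\'er identity $\int_0^\pi \sin^2(k\phi)/\sin^2\phi\,d\phi = k\pi$, recovering the correct $|k|^{2s}=|k|$ behaviour with an explicit constant.
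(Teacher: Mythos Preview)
Your Plancherel reduction to the scalar quantity
\[
\Lambda_s(k)=\int_0^{2\pi}\frac{|1-e^{ik\theta}|^2}{|1-e^{i\theta}|^{1+2s}}\,d\theta
\]
is clean and correct, and your Fej\'er sanity check at $s=\tfrac12$ is exactly right: $\Lambda_{1/2}(k)=2\pi|k|$. The genuine gap is the step you yourself flag as ``where the real work sits'': for the \emph{chordal} kernel the exact scaling $\Lambda_s(k)=c_s|k|^{2s}$ is \emph{false} when $s\neq\tfrac12$. A two-line check using the Beta integrals you propose already shows this:
\[
\Lambda_s(1)=2^{2-2s}\,\frac{\Gamma(1-s)\Gamma(\tfrac12)}{\Gamma(\tfrac32-s)},\qquad
\Lambda_s(2)=2^{4-2s}\,\frac{\Gamma(1-s)\Gamma(\tfrac32)}{\Gamma(\tfrac52-s)},
\qquad\text{so}\quad
\frac{\Lambda_s(2)}{\Lambda_s(1)}=\frac{4}{3-2s},
\]
which equals $2^{2s}$ only at $s=\tfrac12$. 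Your Chebyshev/Beta expansion does sum, but to a ratio of Gamma functions of the type $\Gamma(k+\alpha)/\Gamma(k+\beta)$ with $\alpha-\beta=2s$; this is only \emph{asymptotically} $|k|^{2s}$, not exactly. Consequently your argument delivers $\Lambda_s(k)\asymp|k|^{2s}$ and hence the norm equivalence $[u]_{W^{s,2}}\asymp\||D|^s u\|_{L^2}$, but not the displayed identity \eqref{eq:p2-identity} with an honest ``$=$''.

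The paper avoids this obstruction by changing kernels: it works with the periodized Euclidean kernel $K_s(\theta)=\sum_{n\in\mathbb Z}|\theta+2\pi n|^{-1-2s}$, whose Fourier coefficients are \emph{exactly} $C_s|k|^{2s}$ (Poisson summation from the known hypersingular identity on $\mathbb R$), and then records only the comparison $K_s(\theta)\sim|e^{i\theta}-1|^{-1-2s}$. That comparison yields the same $\asymp$ you obtain, but, strictly speaking, not the ``$=$'' in \eqref{eq:p2-identity} for the chordal denominator either. So the upshot is: with the chordal kernel the identity is exact only at $s=\tfrac12$ (your sanity check); for general $0<s<1$ one should either read ``$=$'' as ``$\asymp$'' or replace $|e^{it}-e^{i\tau}|$ by the periodized Euclidean distance, which is the route the paper's sketch actually takes.
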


\begin{proof}[Sketch with references]
Consider the \(2\pi\)-periodization of the 1D kernel \(|x|^{-1-2s}\):
\[
K_s(\theta):=\sum_{n\in\mathbb Z}\frac{1}{|\theta+2\pi n|^{1+2s}}
\sim \frac{1}{|e^{i\theta}-1|^{1+2s}}
= \frac{1}{\big(2\sin|\theta|/2\big)^{1+2s}},
\]
which is an even, integrable singular kernel on \(\mathbb S^1\). Its Fourier
coefficients satisfy \(\widehat K_s(k)=C_s\,|k|^{2s}\) for \(k\neq0\), with
\(C_s>0\) explicit (see e.g. \cite[Prop.~3.6]{DiNezzaPV} and the periodic
version in \cite[Ch.~V]{SteinSI}). Using polarization,
\[
\iint \frac{(u(t)-u(\tau))\,\overline{(v(t)-v(\tau))}}{|e^{it}-e^{i\tau}|^{1+2s}}\,dt\,d\tau
= \langle K_s*u, v\rangle - \langle K_s*u,1\rangle\langle v,1\rangle,
\]
and evaluating on the Fourier basis \(e^{ikt}\) shows that the quadratic form
has symbol proportional to \(|k|^{2s}\) on mean-zero functions, yielding
\eqref{eq:p2-identity}. Full details for the torus are standard; see
\cite[§7]{DiNezzaPV} (polarization of the Gagliardo form and identification with
\(|D|^{2s}\)) and \cite[Ch.~V]{SteinSI} (singular integrals/periodization).
\end{proof}

\begin{remark}
One may normalize \(c_{s,\mathbb S^1}\) so that the quadratic form equals
\(\langle |D|^{2s}u,u\rangle_{L^2}\). The exact closed form of \(c_{s,\mathbb S^1}\)
is not needed for norm equivalence.
\end{remark}

\subsection*{2) Case $1<p<\infty$: equivalence via Littlewood--Paley/Bessel}

\begin{theorem}[Equivalence of norms on $\mathbb S^1$]\label{th:Fp2=Wsp}
Let $0<s<1$ and $1<p<\infty$. Then
\[
W^{s,p}(\mathbb S^1) \ = \ H^{s,p}(\mathbb S^1)=F^{\,s}_{p,2}(\mathbb S^1),
\]
with equivalent norms:
\[
\|u\|_{W^{s,p}} \ \asymp\ \|u\|_{H^{s,p}}
\ \asymp\ \|u\|_{F^{\,s}_{p,2}}.
\]
\end{theorem}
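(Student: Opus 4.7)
The plan is to interpose the periodic Triebel--Lizorkin space $F^{\,s}_{p,2}(\mathbb S^1)$ as a common intermediary between the two candidate norms and to identify each of them with it. Fix a smooth dyadic partition of unity $\{\varphi_j\}_{j\geq 0}$ on $[0,\infty)$ with $\mathrm{supp}\,\varphi_0\subset[0,2]$ and $\mathrm{supp}\,\varphi_j\subset[2^{j-1},2^{j+1}]$ for $j\geq 1$, and set
\[
\Delta_j u(t):=\sum_{k\in\mathbb Z}\varphi_j(|k|)\,\widehat u_k\,e^{ikt},
\qquad
\|u\|_{F^{\,s}_{p,2}}:=\Bigl\|\Bigl(\sum_{j\geq 0} 4^{js}|\Delta_j u|^2\Bigr)^{1/2}\Bigr\|_{L^p(\mathbb S^1)}.
\]

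The equivalence $\|u\|_{H^{s,p}}\asymp\|u\|_{F^{\,s}_{p,2}}$ would follow from the periodic vector-valued Mikhlin--H\"ormander multiplier theorem applied to the symbols $2^{-js}(1+k^2)^{s/2}\varphi_j(|k|)$ and their reciprocals, which satisfy a Marcinkiewicz-type condition uniformly in $j$: this is the standard Bessel-equals-Triebel--Lizorkin identification, transferred here to the lattice $\mathbb Z=\widehat{\mathbb S^1}$ (see, e.g., Grafakos, \emph{Classical Fourier Analysis}, or Schmeisser--Triebel, \emph{Topics in Fourier Analysis and Function Spaces}). The equivalence $[u]_{W^{s,p}}\asymp\|u\|_{\dot F^{\,s}_{p,2}}$ would rely on the Strichartz/Taibleson difference characterization of Triebel--Lizorkin spaces on $\mathbb R$ (Triebel, \emph{Theory of Function Spaces}~II, Thm.~2.5.10), transplanted to $\mathbb S^1$ by a finite smooth atlas together with a periodisation of the Euclidean kernel $|t-\tau|^{-1-sp}$ into the chordal kernel $|e^{it}-e^{i\tau}|^{-1-sp}$. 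Combining the two identifications, and adding the zero-mode contribution $\|P_0 u\|_{L^p}$ to pass from homogeneous to inhomogeneous norms, yields the claimed equivalence.

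The principal obstacle is the second step: the Gagliardo seminorm is built as an $L^p(L^p)$ integral of differences, whereas the Triebel--Lizorkin norm is an $L^p(\ell^2)$ square-function quantity, so converting one into the other genuinely requires the non-trivial Strichartz characterization, together with a careful comparison of the periodised and Euclidean kernels near the diagonal. The constants produced in this way remain uniform on compact subsets of $\{0<s<1,\ 1<p<\infty\}$, which is all that is needed for later applications, and the density of trigonometric polynomials in both spaces allows the identification to be proved first on that dense class and then extended by continuity.
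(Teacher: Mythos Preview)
Your sketch follows essentially the same route as the paper's: both interpose $F^{\,s}_{p,2}(\mathbb S^1)$ as the common intermediary, identify $H^{s,p}$ with it via Littlewood--Paley/multiplier theory and $W^{s,p}$ with it via the difference (Slobodeckij) characterisation, citing the same circle of references (Triebel, Grafakos, Runst--Sickel) and handling the passage to the periodic setting by transference or a local chart. You spell out the underlying mechanisms (vector-valued Mikhlin for the first equivalence, the Strichartz difference characterisation for the second, and the homogeneous/inhomogeneous adjustment via the zero mode) a bit more explicitly than the paper does, but the strategy and references coincide.
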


\begin{proof}[Sketch with references]
On the one-dimensional torus, the Littlewood--Paley decomposition gives
\[
\|u\|_{F^{\,s}_{p,2}}\ \asymp\ \big\|\big( \sum_{j\ge-1} 2^{2js}|\Delta_j u|^2\big)^{1/2}\big\|_{L^p},
\]
where \(\Delta_j\) are dyadic Fourier projections. It is classical that
\(\|u\|_{H^{s,p}}\asymp \|u\|_{F^s_{p,2}}\) (Bessel \(=\) Triebel–Lizorkin
with \(q=2\)) and that for \(0<s<1\),
\[
\|u\|_{W^{s,p}} \ \asymp\ \|u\|_{F^{\,s}_{p,2}}
\]
(identification Slobodeckij–Triebel–Lizorkin). Precise statements can be found in:
Triebel \cite[Th.~2.5.12, Th.~2.5.14]{Triebel}, Runst–Sickel \cite[Ch.~3–4]{RunstSickel},
and Grafakos \cite[Th.~6.5.1]{GrafakosCA}. For the periodic setting, either argue by
transference from \(\mathbb R\) to \(\mathbb T\) (partition of unity on the torus) or
use directly the periodic dyadic projectors.
\end{proof}

\subsection*{3) Homogeneous vs.\ inhomogeneous, mean-zero issue}

On $\mathbb S^1$, the homogeneous seminorms
\(
|u|_{\dot W^{s,p}}:=[u]_{W^{s,p}}
\)
and
\(
|u|_{\dot H^{s,p}}:=\||D|^s u\|_{L^p}
\)
are defined modulo constants (the kernel of $|D|^s$ is the constants).
A convenient inhomogeneous norm is obtained by adding $\|u\|_{L^p}$ (or fixing
the mean $\dashint u=0$).

\subsection{Useful consequences}

\begin{itemize}
\item \textbf{Duality.} For $1<p<\infty$, the dual of $W^{s,p}(\mathbb S^1)$ is
$W^{-s,p'}(\mathbb S^1)$ (via the $L^2$ pairing), and
$\partial_t:W^{s,p}\to W^{-s,p}$ is bounded.
\item \textbf{Hilbert transform and multipliers.}
The Hilbert transform $\mathcal H$ and fractional powers $|D|^\alpha$
are bounded on $W^{s,p}$ for $1<p<\infty$, and Lipschitz Nemytski{\u i}
operators $a\circ u$ act boundedly on $W^{s,p}$ for $s\le 1$.
\item \textbf{Kernels.} Using $|e^{it}-e^{i\tau}|^2=4\sin^2\big(\frac{t-\tau}{2}\big)$,
the Gagliardo kernel is a periodic Calderón–Zygmund kernel compatible with
the Fourier multipliers $|D|^s$.
\end{itemize}

At the critical exponents $p=1,\infty$, the equivalence with Bessel potential spaces
fails in general, and the Hilbert transform is not bounded on $W^{s,1}$.

\section{Extension of the fundamental 1-form to weak loop spaces with values in $\mathbb R^2$}

Let $\omega_0=dx\wedge dy$ be the standard symplectic form on $\mathbb R^2$,
and $J=\left( \begin{array}{cc}0&-1\\1&0 \end{array} \right)$ the associated complex structure.

\begin{proposition}[Exterior differential of the transgressed 1--form]\label{prop:dTheta}
Let $\omega_0=dx\wedge dy$ be the canonical symplectic form on $\mathbb R^2$
and $\lambda=\tfrac12(x\,dy-y\,dx)$ a primitive of $\omega_0$ (so that $d\lambda=\omega_0$).
On the loop space $C^\infty(\mathbb S^1;\mathbb R^2)$ define the 1--form
\[
\Theta_\lambda(\gamma)[h]\ :=\ \int_{\mathbb S^1}\lambda\big(\gamma(t)\big)[\,h(t)\,]\,dt,
\qquad \gamma\in C^\infty,\ h\in T_\gamma C^\infty\simeq C^\infty(\mathbb S^1;\mathbb R^2).
\]
Then, for all $h,k\in C^\infty(\mathbb S^1;\mathbb R^2)$,
\[
d\Theta_\lambda(\gamma)[h,k]\ =\ \int_{\mathbb S^1} \big(d\lambda\big)_{\gamma(t)}\!\big(h(t),k(t)\big)\,dt
\ =\ \int_{\mathbb S^1}\omega_0\big(h(t),k(t)\big)\,dt.
\]
In particular, $d\Theta_\lambda$ is the canonical (pre)symplectic form on the loop space,
given by the constant form $\int_{\mathbb S^1}\omega_0(h,k)\,dt$.
\end{proposition}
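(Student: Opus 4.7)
The plan is to carry out the standard Cartan-calculus argument on the loop space, reducing everything to the pointwise identity $d\lambda=\omega_0$ on $\mathbb{R}^2$. Concretely, I would regard $h$ and $k$ as \emph{constant} vector fields on the affine Fr\'echet space $C^\infty(\mathbb{S}^1;\mathbb{R}^2)$; their Lie bracket vanishes since constant fields on a vector space commute, so the intrinsic formula for $d$ of a $1$-form collapses to
\[
d\Theta_\lambda(\gamma)[h,k] \;=\; h\bigl(\Theta_\lambda(\cdot)[k]\bigr)(\gamma) \;-\; k\bigl(\Theta_\lambda(\cdot)[h]\bigr)(\gamma).
\]

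Next I would differentiate under the integral. Since $\mathbb{S}^1$ is compact, $\lambda$ is smooth (in fact affine) in its base point, and the loops $\gamma,h,k$ are $C^\infty$, dominated convergence yields
\[
h\bigl(\Theta_\lambda(\cdot)[k]\bigr)(\gamma) \;=\; \int_{\mathbb{S}^1}(D_{h(t)}\lambda)_{\gamma(t)}[k(t)]\,dt,
\]
where $D$ denotes differentiation of $\lambda:\mathbb{R}^2\to(\mathbb{R}^2)^*$ with respect to the base point, and symmetrically with $h,k$ swapped. Subtracting the two terms and combining them under a single integral, I would invoke the coordinate-free identity
\[
(d\lambda)_p(u,v)\;=\;(D_u\lambda)_p[v]\;-\;(D_v\lambda)_p[u]
\]
(valid for any smooth $1$-form on $\mathbb{R}^2$ and fixed tangent vectors $u,v$, since constant fields commute), applied at $p=\gamma(t)$, $u=h(t)$, $v=k(t)$. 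This gives
\[
d\Theta_\lambda(\gamma)[h,k] \;=\; \int_{\mathbb{S}^1}(d\lambda)_{\gamma(t)}\bigl(h(t),k(t)\bigr)\,dt \;=\; \int_{\mathbb{S}^1}\omega_0\bigl(h(t),k(t)\bigr)\,dt,
\]
the final equality being the one-line check $d\lambda=dx\wedge dy=\omega_0$ from $\lambda=\tfrac12(x\,dy-y\,dx)$. An equivalent and very short alternative is to note that $\lambda_\gamma[h]=\tfrac12\langle J\gamma,h\rangle$, so that $\Theta_\lambda(\gamma)[h]=\tfrac12\int\langle J\gamma,h\rangle\,dt$ is explicitly bilinear in $(\gamma,h)$ and one reads the result directly off its antisymmetric part, using $J^\top=-J$.

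There is no genuine obstacle in the smooth setting: the computation is entirely formal, and every exchange of limit and integral is justified by compactness of $\mathbb{S}^1$ together with smoothness of the data. The real subtlety that this proposition is a warm-up for---and where the harmonic-analysis backbone of Section~2 will be used---is the extension of the same formula to $W^{s,p}(\mathbb{S}^1;\mathbb{R}^2)$: there one must check that $\Theta_\lambda$ is a well-defined, continuous $1$-form on the weak loop space, which reduces to an $L^p\times L^{p'}$-type bound on the pairing $\int\langle J\gamma,h\rangle\,dt$ (equivalently, a $W^{s,p}\times W^{-s,p'}$ duality estimate) and to the Nemytski{\u i}-type continuity statements recalled earlier. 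That analytic step does not affect the present smooth statement, but it is the point where the choice $0<s\le\tfrac12$, $1<p<\infty$ becomes essential.
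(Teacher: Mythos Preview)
Your proof is correct and follows essentially the same route as the paper: both treat $h,k$ as constant vector fields on the affine loop space so that the bracket term in the intrinsic formula for $d\Theta_\lambda$ drops out, then reduce the two directional derivatives to the pointwise identity $d\lambda=\omega_0$ on $\mathbb R^2$. Your use of the formula $(d\lambda)_p(u,v)=(D_u\lambda)_p[v]-(D_v\lambda)_p[u]$ is in fact a bit cleaner than the paper's detour through Cartan's formula $\mathcal L_h\lambda=\iota_h d\lambda+d(\iota_h\lambda)$, and your explicit bilinear alternative via $\lambda_\gamma[h]=\tfrac12\langle J\gamma,h\rangle$ is a nice shortcut that the paper does not spell out.
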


\begin{proof}[Sketch of proof]
Work first in the smooth category. Since the loop space is a vector space,
the tangent vectors $h,k$ can be viewed as constant vector fields (independent of $\gamma$).
By definition of the exterior derivative,
\[
d\Theta_\lambda(\gamma)[h,k]
=\frac{d}{d\varepsilon}\Big|_{\varepsilon=0}\Theta_\lambda(\gamma+\varepsilon h)[k]
 -\frac{d}{d\varepsilon}\Big|_{\varepsilon=0}\Theta_\lambda(\gamma+\varepsilon k)[h].
\]
For the first term,
\[
\frac{d}{d\varepsilon}\Big|_{0}\Theta_\lambda(\gamma+\varepsilon h)[k]
=\int_{\mathbb S^1}(\mathcal L_h\lambda)_{\gamma(t)}[\,k(t)\,]\,dt.
\]
Since $h,k$ are constant vector fields ($[h,k]=0$), Cartan’s formula
$\mathcal L_h\lambda=\iota_h d\lambda + d(\iota_h\lambda)$ gives, after integration on the circle
(the exact term integrates to zero),
\[
\frac{d}{d\varepsilon}\Big|_{0}\Theta_\lambda(\gamma+\varepsilon h)[k]
=\int_{\mathbb S^1} d\lambda_{\gamma(t)}\big(h(t),k(t)\big)\,dt.
\]
Subtracting the corresponding term with $h$ and $k$ exchanged yields
$d\Theta_\lambda(\gamma)[h,k]=\int_{\mathbb S^1} d\lambda_{\gamma(t)}(h(t),k(t))\,dt$,
and taking $d\lambda=\omega_0$ gives the result.
\end{proof}

\begin{remark}[Extension to weak loop spaces]
By density of $C^\infty(\mathbb S^1;\mathbb R^2)$ in $H^{1/2}$ (or, more generally,
in $W^{s,p}$ for $0<s\le\tfrac12$, $1<p<\infty$) and by continuity of Nemytski{\u i} operators
$\gamma\mapsto\lambda(\gamma)$ for $\lambda\in C^{1,1}_b$, the same formula remains valid
in these settings: $\Theta_\lambda$ defines a continuous 1--form and
$d\Theta_\lambda(\gamma)[h,k]=\int_{\mathbb S^1}\omega_0(h,k)\,dt$
for $h,k$ in the natural tangent spaces (Hilbert: $H^{1/2}$; Banach: $W^{s,p}/W^{s,p'}$).
\end{remark}

\begin{remark}[Conceptual framework: transgression]
This computation is a special case of the \emph{transgression} construction:
if $\lambda$ is a primitive of a closed 2--form $\omega$ on a manifold $M$,
the 1--form
\[
\Theta_\lambda(\gamma)[h] \;=\; \int_{S^1}\lambda_{\gamma(t)}(h(t))\,dt
\]
on the free loop space $LM$ satisfies
\[
d\Theta_\lambda(\gamma)[h,k] \;=\; \int_{S^1}\omega_{\gamma(t)}(h(t),k(t))\,dt.
\]
For a general reference, see \cite[Chap.~2]{Brylinski1993} or
\cite[§4.7]{PressleySegal1986}.
\end{remark}

\subsection{Extension to $H^{1/2}$}

\begin{theorem}
\label{th:Hhalf}
The formula
\[
\Theta(\gamma)[h] := \langle J\partial_t\gamma, h\rangle_{H^{-1/2},H^{1/2}}
\]
defines a continuous 1-form on $H^{1/2}(\mathbb S^1;\mathbb R^2)$.
Moreover, by density of $C^\infty$ in $H^{1/2}$, it coincides with the smooth
transgression and satisfies
\[
d\Theta(h,k)=\int_{\mathbb S^1}\omega_0(h(t),k(t))\,dt,\qquad
h,k\in H^{1/2}(\mathbb S^1;\mathbb R^2).
\]
\end{theorem}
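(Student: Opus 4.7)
The plan is to prove continuity of $\Theta$ from the mapping properties of $\partial_t$, and then reduce the exterior-derivative identity to the smooth case through density, where it follows from Proposition~\ref{prop:dTheta}.

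For continuity I rely on the duality bullet in the Useful Consequences subsection: $\partial_t$ maps $H^{1/2}(\mathbb S^1;\mathbb R^2)$ boundedly into $H^{-1/2}(\mathbb S^1;\mathbb R^2)$. Composition with the constant matrix $J$ preserves boundedness, and pairing through the continuous $H^{-1/2}/H^{1/2}$ duality makes $(\gamma,h)\mapsto\langle J\partial_t\gamma,h\rangle$ a continuous bilinear form on $H^{1/2}\times H^{1/2}$. In particular $\Theta(\gamma)$ lies in $(H^{1/2})^*$ with norm controlled by $\|\gamma\|_{H^{1/2}}$, so $\Theta$ is a smooth 1-form on $H^{1/2}$, linear in $\gamma$. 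For $\gamma\in C^\infty$ the pairing collapses to the $L^2$ integral $\int_{\mathbb S^1}(J\dot\gamma)\cdot h\,dt$, which is the standard smooth transgression of $\omega_0$ to the loop space; density of $C^\infty$ in $H^{1/2}$ combined with continuity of $\Theta$ identifies the $H^{1/2}$-object with the unique continuous extension of this smooth transgression.

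To compute $d\Theta$, I exploit that $\Theta$ is linear in $\gamma$ on the vector space $H^{1/2}$: its exterior derivative, evaluated on constant vector fields $h,k$, reduces to the skew-symmetrization of the underlying bilinear form. For smooth data I will express $\Theta$ in the form $\Theta_\lambda$ from Proposition~\ref{prop:dTheta}, modulo an exact 1-form on the loop space, via an integration by parts on the closed curve $\mathbb S^1$ (boundary terms vanish), so that the exterior derivatives agree and Proposition~\ref{prop:dTheta} delivers $d\Theta(h,k)=\int_{\mathbb S^1}\omega_0(h(t),k(t))\,dt$. Both sides of the identity are continuous bilinear forms on $H^{1/2}$ — the right-hand side being continuous already on $L^2\supset H^{1/2}$ — and density of $C^\infty$ in $H^{1/2}$ then extends the equality to all of $H^{1/2}(\mathbb S^1;\mathbb R^2)$.

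The hard part is the bridge between the $\partial_t$-based expression of $\Theta$, which is natural for the $H^{\pm 1/2}$-duality and makes the continuity transparent, and the Liouville primitive $\Theta_\lambda$ of Proposition~\ref{prop:dTheta}, which is the representative to which the transgression identity directly applies. Matching the two requires an integration by parts on $\mathbb S^1$ to transfer $\partial_t$ between $\gamma$ and the test vector fields, together with careful bookkeeping of signs from the identification $\omega_0(v,w)=(Jv)\cdot w$; once this reconciliation is carried out in the smooth category, the extension to $H^{1/2}$ is routine.
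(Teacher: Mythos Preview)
Your continuity argument and the identification of $\Theta$ with the smooth integral $\int_{\mathbb S^1}\langle J\dot\gamma,h\rangle\,dt$ on $C^\infty$ match the paper's proof essentially verbatim.

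The gap is in the exterior-derivative step. Your proposed bridge --- writing $\Theta$ as $\Theta_\lambda$ plus an exact $1$-form on the loop space, obtained by an integration by parts on $\mathbb S^1$ --- cannot succeed, because the two $1$-forms have \emph{different} exterior derivatives. In fact $\Theta$ is itself exact: with
$G(\gamma):=\langle J\partial_t\gamma,\gamma\rangle_{H^{-1/2},H^{1/2}}$
(minus twice the signed area enclosed by $\gamma$), the operator $J\partial_t$ is symmetric for the duality pairing (both $J$ and $\partial_t$ are skew), so $dG(\gamma)[h]=2\langle J\partial_t\gamma,h\rangle=2\,\Theta(\gamma)[h]$ and hence $d\Theta=0$. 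A direct computation confirms this: for smooth $h,k$,
\[
d\Theta[h,k]=\int_{\mathbb S^1}\!\langle J\dot h,k\rangle\,dt-\int_{\mathbb S^1}\!\langle J\dot k,h\rangle\,dt
=\int_{\mathbb S^1}\!\partial_t\bigl(\omega_0(h,k)\bigr)\,dt=0,
\]
whereas Proposition~\ref{prop:dTheta} gives $d\Theta_\lambda[h,k]=\int_{\mathbb S^1}\omega_0(h,k)\,dt$, which is generically nonzero (e.g.\ $h=(\cos t,\sin t)$, $k=(-\sin t,\cos t)$ yields $2\pi$). So no integration by parts on $\mathbb S^1$ can manufacture the relation you posit, and the asserted identity $d\Theta(h,k)=\int\omega_0(h,k)\,dt$ does not follow from your argument. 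The paper's own one-line ``by polarization and density'' is no more informative on this point; the stated formula conflates the transgression $\Theta_\lambda$ of the Liouville primitive with the transgression $\gamma\mapsto\int\omega_0(\dot\gamma,\cdot)$ of $\omega_0$ itself, and only the former has the displayed differential.
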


\begin{proof}
Since $\partial_t:H^{1/2}\to H^{-1/2}$ is continuous, and
the duality $\langle \cdot,\cdot\rangle_{H^{-1/2},H^{1/2}}$ is well defined,
the expression makes sense and is continuous bilinear in $(\gamma,h)$.
For smooth $\gamma,h$, this reduces to the standard integral.
Approximation by smooth functions in $H^{1/2}$ proves the identity
for all $\gamma,h\in H^{1/2}$.
The formula for $d\Theta$ follows by polarization and density.
\end{proof}

For the sequel, we recall the following: 
\begin{definition}[Hilbert transform]
\leavevmode
\begin{itemize}
\item On the real line $\mathbb R$, the Hilbert transform of a function 
$f\in L^p(\mathbb R)$ ($1<p<\infty$) is defined by the principal value integral
\[
\mathcal H f(x) \;:=\; \frac{1}{\pi}\,\text{p.v.}\int_{\mathbb R} \frac{f(y)}{x-y}\,dy
\;=\; \lim_{\varepsilon\to 0^+} \frac{1}{\pi}\int_{|y-x|>\varepsilon}\frac{f(y)}{x-y}\,dy.
\]

\item On the circle $\mathbb S^1=\mathbb R/2\pi\mathbb Z$, for $f\in L^2(\mathbb S^1)$
with Fourier series
\[
f(t)=\sum_{k\in\mathbb Z}\widehat f_k\,e^{ikt},\qquad
\widehat f_k=\frac{1}{2\pi}\int_0^{2\pi} f(t)e^{-ikt}\,dt,
\]
the Hilbert transform is defined spectrally by
\[
\mathcal H f(t) \;=\; -\,i\sum_{k\in\mathbb Z} \operatorname{sgn}(k)\,\widehat f_k\,e^{ikt},
\]
where $\operatorname{sgn}(k)=1$ if $k>0$, $-1$ if $k<0$, and $0$ if $k=0$.
Equivalently, it can be written as the principal value convolution
\[
\mathcal H f(t) \;=\; \frac{1}{2\pi}\,\text{p.v.}\int_0^{2\pi}
f(\tau)\,\cot\!\Big(\tfrac{t-\tau}{2}\Big)\,d\tau.
\] or by the operator formula $$\mathcal{H} = -\partial_t |D|^{-1}$$
where $$|D| f(t) = \sum_{k\in\mathbb Z} \max \left( 1, \operatorname{sgn}(k)\right) \,\widehat f_k\,e^{ikt}.$$
\end{itemize}
\end{definition}

\begin{remark}
We also have:
\[
\Theta(\gamma)[h]=c\int_{\mathbb S^1}\langle |D|^{1/2}\gamma,\ J\mathcal H|D|^{1/2}h\rangle\,dt,
\]
where $\mathcal H$ is the Hilbert transform on $\mathbb S^1$.
\end{remark}

\subsection{Generalization to $W^{s,p}$, $0<s\le\frac{1}{2}$, $1<p<\infty$}

\begin{proposition}
Let $0<s\le \tfrac12$, $1<p<\infty$, and $p'$ the conjugate exponent.
Then for $\gamma\in W^{s,p}(\mathbb S^1;\mathbb R^2)$, the functional
\[
\Theta_{s,p}(\gamma)[h]:=\langle J\partial_t\gamma,h\rangle_{W^{-s,p},W^{s,p'}},
\qquad h\in W^{s,p'}(\mathbb S^1;\mathbb R^2),
\]
is well defined and satisfies
\[
|\Theta_{s,p}(\gamma)[h]|\le C\,\|\gamma\|_{W^{s,p}}\ \|h\|_{W^{s,p'}}.
\]
\end{proposition}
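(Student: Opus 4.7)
The plan is to reduce the inequality to Hölder's inequality via a fractional splitting of the single derivative $\partial_t$ between $\gamma$ and $h$, and then to extend the resulting bilinear form from smooth loops by density.

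First, I would fix $\gamma,h\in C^\infty(\mathbb S^1;\mathbb R^2)$, so that $\Theta_{s,p}(\gamma)[h]$ is the genuine integral $\int_{\mathbb S^1} J\partial_t\gamma\cdot h\,dt$. Using the factorization $\partial_t=-\mathcal H|D|$ from the definition of the Hilbert transform, together with self-adjointness and commutativity of the Fourier multipliers $|D|^\sigma$ and $\mathcal H$ with the constant matrix $J$, I would split the derivative as $|D|=|D|^{s}|D|^{1-s}$ and rewrite the integral as
\[
\Theta_{s,p}(\gamma)[h] \;=\; -\int_{\mathbb S^1} J\mathcal H\bigl(|D|^{s}\gamma\bigr)\cdot \bigl(|D|^{1-s}h\bigr)\,dt.
\]
Hölder's inequality with conjugate exponents $(p,p')$, the $L^p$-boundedness of $\mathcal H$ for $1<p<\infty$, and the equivalence $\||D|^{\sigma}u\|_{L^p}\asymp\|u\|_{W^{\sigma,p}}$ from Theorem~\ref{th:Fp2=Wsp} would then yield an a priori bound of the form $C\,\|\gamma\|_{W^{s,p}}\,\|h\|_{W^{1-s,p'}}$. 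Density of $C^\infty$ in both $W^{s,p}$ and $W^{s,p'}$ would close the argument: the bilinear form extends uniquely and continuously to the full Sobolev product, and the extension agrees with the intrinsic duality pairing $\langle J\partial_t\gamma,h\rangle_{W^{-s,p},W^{s,p'}}$.

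The main obstacle is passing from this natural bound, which involves $\|h\|_{W^{1-s,p'}}$, to the sharper $\|h\|_{W^{s,p'}}$ claimed in the statement. At $s=\tfrac12$ the two norms coincide and the inequality is immediate; for $s<\tfrac12$ one has the strict inclusion $W^{1-s,p'}\subsetneq W^{s,p'}$, and one must redistribute the excess $(1-2s)$ derivatives by exploiting the fact that $\mathcal H\,|D|^{1-2s}$ is a Fourier multiplier of the appropriate order between the relevant Triebel--Lizorkin scales, by the Mikhlin--Hörmander theorem on $\mathbb S^1$. This is the technical heart of the proof and is exactly where the assumption $1<p<\infty$ is used; at the endpoints $p=1,\infty$ the Hilbert transform ceases to be bounded and the argument breaks down.
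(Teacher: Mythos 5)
Your first step is sound and in fact yields the sharp estimate: splitting $\partial_t=-\mathcal H|D|^{s}|D|^{1-s}$ and applying H\"older gives $|\Theta_{s,p}(\gamma)[h]|\le C\,\|\gamma\|_{W^{s,p}}\,\|h\|_{W^{1-s,p'}}$, and this is the best one can do. The ``redistribution'' you propose in order to upgrade $\|h\|_{W^{1-s,p'}}$ to $\|h\|_{W^{s,p'}}$ cannot work: $\mathcal H|D|^{1-2s}$ is a Fourier multiplier of \emph{positive} order $1-2s$ when $s<\tfrac12$, so it loses $1-2s$ derivatives rather than redistributing them, and Mikhlin--H\"ormander only yields $L^p$-boundedness for order-zero symbols. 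The obstruction is not technical but real: testing on a single Fourier mode, say $\gamma=\cos(kt)\,e_1$ and $h=\sin(kt)\,e_2$ with $p=p'=2$, one finds $\Theta_{s,2}(\gamma)[h]=-\pi k$ while $\|\gamma\|_{W^{s,2}}\,\|h\|_{W^{s,2}}\asymp|k|^{2s}$, so the claimed inequality fails for every $s<\tfrac12$ as $k\to\infty$. The heuristic is simple: $\partial_t$ carries one derivative, each factor can absorb at most $s$ of them in the stated norms, and $2s<1$ leaves a deficit of $1-2s$ that no bounded multiplier can recover.

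For comparison, the paper's own proof is a one-liner resting on the chain $\partial_t:W^{s,p}\to W^{s-1,p}\subset W^{-s,p}$; but the embedding $W^{s-1,p}\subset W^{-s,p}$ requires $s-1\ge -s$, i.e.\ $s\ge\tfrac12$, so within the stated range $0<s\le\tfrac12$ that argument (and the proposition itself) is valid only at the endpoint $s=\tfrac12$, where it reduces to the $H^{1/2}$-type pairing. Your blind attempt actually proves the correct substitute statement for $s<\tfrac12$, namely that $\Theta_{s,p}(\gamma)$ is continuous on $W^{1-s,p'}$ (equivalently, $\partial_t\gamma\in W^{s-1,p}=(W^{1-s,p'})^\ast$); the honest conclusion is that the proposition should either be restricted to $s=\tfrac12$ or have its test space corrected to $W^{1-s,p'}$, rather than patched by a multiplier argument at the point you flagged.
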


\begin{proof}
The operator $\partial_t:W^{s,p}\to W^{s-1,p}\subset W^{-s,p}$ is continuous,
and multiplication by $J$ is bounded.
Therefore $\Theta_{s,p}(\gamma)$ defines a continuous element
of $(W^{s,p'})^\ast$. The estimate is immediate from continuity.
\end{proof}

\begin{remark}
For $p=2$, one recovers the symmetric Hilbertian structure of
Theorem~\ref{th:Hhalf}. For $p\neq 2$, the 1-form takes values naturally in the
Banach dual $(W^{s,p'})^\ast$.
\end{remark}

\subsection{Integral kernel representation (case $H^{1/2}$)}

For $\gamma,h\in H^{1/2}$, one has the alternative formula
\[
\Theta(\gamma)[h]
= C\iint_{\mathbb S^1\times\mathbb S^1}
\frac{\langle \gamma(t)-\gamma(\tau),\,J(h(t)-h(\tau))\rangle}{|t-\tau|^2}\,dt\,d\tau,
\]
obtained by fractional integration by parts.
This emphasizes the analogy with the Gagliardo seminorm.

\subsection{The BV case}

For $\gamma\in BV(\mathbb S^1;\mathbb R^2)$, the derivative
$\partial_t\gamma$ is a measure. One can formally set
\[
\Theta_{BV}(\gamma)[h]=\int \langle J\,d\gamma,h\rangle,
\]
for bounded $h$. However, $\Theta_{BV}$ is not a continuous 1-form on the whole
Banach space $BV$, since the dual of $BV$ is not given by functions.
Therefore, a satisfactory extension exists on $H^{s,p}$, but in pure $BV$
one must restrict the class of directions.

\subsection{Variable 2-forms}

If $\omega\in C^{0,1}(\mathbb R^2,\Lambda^2)$ is a smooth bounded 2-form,
represented by an antisymmetric matrix field $B(x)$, one may define
\[
\Theta_\omega(\gamma)[h]=\langle B(\gamma)\partial_t\gamma,h\rangle_{H^{-s},H^s},
\]
which is well defined since $B(\gamma)$ acts as a bounded multiplier on $H^{s}$
and on $W^{s,p}$ for $s\le1$, $1<p<\infty$.
\medskip

\noindent\textbf{References.} The tools used above are classical:
duality $H^{-s}$–$H^s$ and density (see \cite{BM1,BM2}); boundedness of
$\mathcal H$ and $|D|^s$ (see Triebel \cite{Triebel}); composition with Lipschitz
functions (see e.g.\ Runst–Sickel \cite{RunstSickel}). 
The viewpoint of extending the transgressed fundamental form to weak loop
spaces appears new.
\medskip

\section{Kernel representation of the fundamental 1-form on $H^{1/2}$}

Recall $J=\left( \begin{array}{cc}0&-1\\1&0 \end{array} \right)$ and the Hilbert transform
$\mathcal H$ on $\mathbb S^1$. For $\gamma,h\in H^{1/2}(\mathbb S^1;\mathbb R^2)$ we defined
\[
\Theta(\gamma)[h]:=\langle J\partial_t\gamma,\ h\rangle_{H^{-1/2},H^{1/2}}
\]
(Th.~\ref{th:Hhalf}). We now prove the Gagliardo-type kernel representation
\begin{equation}\label{eq:Theta-kernel}
\Theta(\gamma)[h]
\;=\; C_{\mathbb S^1}\iint_{\mathbb S^1\times\mathbb S^1}
\frac{\big\langle \gamma(t)-\gamma(\tau),\ J\big(h(t)-h(\tau)\big)\big\rangle}{|e^{it}-e^{i\tau}|^{2}}
\,dt\,d\tau,
\end{equation}
for a universal constant $C_{\mathbb S^1}\neq 0$ (whose value depends only on the
Fourier conventions and on the normalization of arc-length measure on $\mathbb S^1$).

\subsection{A scalar bilinear identity at $s=\frac{1}{2}$}

We start with a scalar identity. For $u,v\in H^{1/2}(\mathbb S^1)$, set
\[
B(u,v):=\big(|D|^{1/2}u,\ \mathcal H\,|D|^{1/2}v\big)_{L^2(\mathbb S^1)}.
\]
Recall that $\partial_t = \mathcal H\,|D|$ on mean-zero functions, hence
$B(u,v)=\pm(\partial_t u,v)_{L^2}$ (the sign depends on conventions; we keep it
as $+$ in what follows, absorbing it into $C_{\mathbb S^1}$ at the end).
We will use the following well-known representation (see e.g.~\cite[§7]{DiNezzaPV}, \cite[Ch.~V]{SteinSI}).

\begin{lemma}\label{lem:scalar-kernel}
There exists a nonzero constant $C_{\mathbb S^1}$ such that, for all
$u,v\in H^{1/2}(\mathbb S^1)$,
\begin{equation}\label{eq:B-kernel}
B(u,v)
\;=\; C_{\mathbb S^1}
\iint_{\mathbb S^1\times\mathbb S^1}
\frac{\big(u(t)-u(\tau)\big)\big(v(t)-v(\tau)\big)}{|e^{it}-e^{i\tau}|^{2}}\,dt\,d\tau.
\end{equation}
\end{lemma}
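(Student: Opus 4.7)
The plan is to reduce the identity to a Fourier-series computation on a dense subspace and then extend by continuity. First I would restrict to smooth mean-zero functions $u, v \in C^\infty(\mathbb S^1)$, which are dense in $H^{1/2}(\mathbb S^1)$ modulo the kernel of $|D|^{1/2}$. The constant mode contributes nothing to either side, since it is annihilated by $|D|^{1/2}$ and by the Gagliardo differences $u(t)-u(\tau)$; this reduction is justified because both $B$ and the kernel integral vanish on constants.

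For the kernel integral I would invoke Proposition~\ref{prop:p2} at $s = 1/2$, which yields the quadratic identity $[u]_{W^{1/2,2}}^2 = c\sum_{k \ne 0} |k|\,|\widehat u_k|^2$. I would then polarize this quadratic form, either through the standard identity $4\langle u,v\rangle_Q = Q(u+v) - Q(u-v)$ or, more explicitly, by computing the double integral directly on pairs of Fourier modes $e^{ikt}, e^{i\ell t}$ and invoking orthogonality of trigonometric exponentials; both routes produce a diagonal bilinear form with Fourier symbol $c|k|$ on mode $k$. For the operator side, $B(u,v)$ is computed by Parseval using the multipliers $|D|^{1/2}\leftrightarrow |k|^{1/2}$ and $\mathcal H \leftrightarrow -i\,\mathrm{sgn}(k)$; this exhibits $B$ as another diagonal form on Fourier modes, reflecting the operator identity $|D|^{1/2}\mathcal H |D|^{1/2} = -\partial_t$ on mean-zero functions. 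Matching the two diagonal symbols mode by mode pins down the universal constant $C_{\mathbb S^1}$, whose precise value depends only on Fourier conventions and on the normalization of arc-length measure.

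Once the identity is established on smooth mean-zero functions, extension by density is immediate: the Gagliardo integral is jointly continuous on $H^{1/2}\times H^{1/2}$ by Cauchy–Schwarz applied to its kernel representation, and $B$ is continuous by Theorem~\ref{th:Fp2=Wsp} together with the $L^2$-boundedness of $\mathcal H$ and of $|D|^{1/2}:H^{1/2}\to L^2$. The main obstacle I anticipate is a bookkeeping one: reconciling the manifestly symmetric Gagliardo kernel with the Hilbert-transform-twisted form $B$ requires careful tracking of the $\mathrm{sgn}(k)$ factor through polarization, so as to verify that both sides yield the same Fourier multiplier up to the single normalization $C_{\mathbb S^1}$ rather than differing by the symmetric/antisymmetric decomposition that $\mathcal H$ naturally induces. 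This is the step where I would be most careful about conventions and, if necessary, check the identity explicitly on a single pair of modes $e^{ikt},\,e^{i\ell t}$ before passing to the general case by linearity and density.
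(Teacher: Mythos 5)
Your strategy---compute $B$ by Parseval with the multipliers $|k|^{1/2}$ and $-i\,\mathrm{sgn}(k)$, polarize the Gagliardo quadratic form of Proposition~\ref{prop:p2} at $s=\tfrac12$, match Fourier symbols on smooth mean-zero functions, and extend by density---is exactly the route the paper's own proof takes. But the ``bookkeeping obstacle'' you flag in your last paragraph is not bookkeeping: it is a genuine obstruction, and the symbol-matching step fails. By Parseval, $B(u,v)=\big(|D|^{1/2}u,\mathcal H|D|^{1/2}v\big)_{L^2}$ is a diagonal form with symbol $\pm i\,\mathrm{sgn}(k)\,|k|=\pm ik$, which is \emph{odd} in $k$ (equivalently, $B(u,v)=\pm(\partial_t u,v)_{L^2}$, which is antisymmetric in $(u,v)$ for real-valued functions, by integration by parts). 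The polarized Gagliardo form on the right of \eqref{eq:B-kernel} has symbol $c\,|k|$, which is \emph{even} in $k$, and is manifestly symmetric in $(u,v)$. The ratio of the two symbols is $\pm i\,\mathrm{sgn}(k)/c$, which depends on $k$, so no single constant $C_{\mathbb S^1}$ can relate them; a nonzero antisymmetric bilinear form cannot equal a nonzero multiple of a symmetric one. The single-mode sanity check you propose as a fallback would expose this at once: for $u=v=\cos(kt)$ the double integral is a positive multiple of $k$, while $B(u,u)=\pm\int_{\mathbb S^1}u'u\,dt=0$.

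To be clear, this is not a defect of your write-up relative to the paper: the paper's proof derives precisely the two displays with symbols $\pm ik$ and $c|k|$ and then asserts that ``combining'' them gives \eqref{eq:B-kernel}, which is the same invalid step. As stated, the identity \eqref{eq:B-kernel} is false for any $C_{\mathbb S^1}\neq 0$. The correct version inserts a Hilbert transform into one slot of the kernel,
\[
B(u,v)\;=\;C\iint_{\mathbb S^1\times\mathbb S^1}\frac{\big(u(t)-u(\tau)\big)\big(\mathcal H v(t)-\mathcal H v(\tau)\big)}{|e^{it}-e^{i\tau}|^{2}}\,dt\,d\tau,
\]
since $\mathcal H$ supplies exactly the missing factor $-i\,\mathrm{sgn}(k)$ turning $|k|$ into $\mp ik$; this is the form that does appear later in the paper in the remark on the kernel representation of $d\Theta_0$. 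If you want a correct proof, carry out your Fourier computation against the corrected kernel; the rest of your argument (mean-zero reduction, continuity of both sides on $H^{1/2}\times H^{1/2}$, density) then goes through.
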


\begin{proof}
We argue by Fourier series. For $w\in L^2(\mathbb S^1)$, write
$w(t)=\sum_{k\in\mathbb Z}\widehat w_k e^{ikt}$.
Then $|D|^{1/2}$ and $\mathcal H$ act by multipliers
$|k|^{1/2}$ and $-i\,\mathrm{sgn}(k)$, respectively. Hence
\[
B(u,v)=\sum_{k\in\mathbb Z}\big(|k|^{1/2}\widehat u_k\big)\,\overline{\big(-i\,\mathrm{sgn}(k)\,|k|^{1/2}\widehat v_k\big)}
= -i\sum_{k\in\mathbb Z}\mathrm{sgn}(k)\,|k|\,\widehat u_k\,\overline{\widehat v_k}.
\]
Since $\mathrm{sgn}(k)\,|k|=k$, we get $B(u,v)=-i\sum_{k}k\,\widehat u_k\,\overline{\widehat v_k}$, i.e.
$B(u,v)=(\partial_t u,v)_{L^2}$ up to an overall sign (absorbed in $C_{\mathbb S^1}$).

On the other hand, by the standard characterization of the $H^{1/2}$ semi-inner-product (polarization of the Gagliardo seminorm), there exists $c\neq 0$ such that
\[
\sum_{k\in\mathbb Z}|k|\,\widehat u_k\,\overline{\widehat v_k}
\;=\; c\iint_{\mathbb S^1\times\mathbb S^1}
\frac{(u(t)-u(\tau))(v(t)-v(\tau))}{|e^{it}-e^{i\tau}|^{2}}\,dt\,d\tau,
\]
see e.g.~\cite[Prop.~3.3, Prop.~3.6]{DiNezzaPV}. Combining the two displays yields
\eqref{eq:B-kernel} with $C_{\mathbb S^1}=\pm c$.
\end{proof}

\begin{remark}
Using $|e^{it}-e^{i\tau}|^{2}=2-2\cos(t-\tau)=4\sin^2\!\big(\tfrac{t-\tau}{2}\big)$, the kernel can be written as
$\,\big(4\sin^2\!\tfrac{t-\tau}{2}\big)^{-1}$, which is more explicit on $\mathbb S^1$.
\end{remark}

\subsection{Vector-valued case and the matrix $J$}

We now pass to $\mathbb R^2$-valued functions. For $\gamma,h\in H^{1/2}(\mathbb S^1;\mathbb R^2)$,
apply Lemma~\ref{lem:scalar-kernel} componentwise and use bilinearity:
\[
\big(|D|^{1/2}\gamma,\ J\mathcal H\,|D|^{1/2}h\big)_{L^2}
=\sum_{j=1}^2\sum_{\ell=1}^2 J_{j\ell}\,B(\gamma_\ell,h_j).
\]
Therefore
\[
\big(|D|^{1/2}\gamma,\ J\mathcal H\,|D|^{1/2}h\big)_{L^2}
=C_{\mathbb S^1}
\iint\frac{\big\langle \gamma(t)-\gamma(\tau),\,J\big(h(t)-h(\tau)\big)\big\rangle}{|e^{it}-e^{i\tau}|^{2}}\,dt\,d\tau.
\]
Finally, since $\langle J\partial_t\gamma,h\rangle_{H^{-1/2},H^{1/2}}
=\big(|D|^{1/2}\gamma,\ J\mathcal H\,|D|^{1/2}h\big)_{L^2}$ (by the identity $\partial_t=\mathcal H|D|$ on mean-zero components and density),
we have proved \eqref{eq:Theta-kernel}.

\begin{remark}[Symmetry and continuity]
The right-hand side of \eqref{eq:Theta-kernel} is jointly continuous on $H^{1/2}\times H^{1/2}$ and
obeys the estimate
\[
|\Theta(\gamma)[h]|\ \lesssim\ [\gamma]_{H^{1/2}}\,[h]_{H^{1/2}},
\]
because it is (up to a constant) the polarization of the $H^{1/2}$ Gagliardo semi-inner-product composed with $J$.
\end{remark}

\begin{remark}[Agreement with the smooth transgression]
If $\gamma,h\in C^\infty$, integration by parts on $\mathbb S^1$ gives
$\big(|D|^{1/2}\gamma,\ J\mathcal H\,|D|^{1/2}h\big)_{L^2}
=(J\dot\gamma,h)_{L^2}$ (up to the fixed sign convention), and the kernel formula reduces to
$\Theta(\gamma)[h]=\int_{\mathbb S^1}\langle J\dot\gamma,h\rangle\,dt$.
\end{remark}

\begin{remark}[Variable 2-forms]
If $\omega(x)=B(x)\,dx\wedge dy$ with $B\in C^{0,1}_b(\mathbb R^2)$, the same proof applies with $J$ replaced by $B(\gamma(t))$ inside the integrand: since $B(\gamma)\in L^\infty$ and acts as a bounded multiplier on $H^{1/2}$, the bilinear form remains continuous and the representation holds with $J$ replaced by $B(\gamma(\cdot))$ (by a density/multiplier argument).
\end{remark}

\medskip
\noindent\textbf{References for this section.}
Hilbert transform and Fourier multiplier identities: Stein~\cite{SteinSI}. Fractional seminorms and their polarization on the circle: Di~Nezza--Palatucci--Valdinoci~\cite{DiNezzaPV}. For background on $H^{1/2}$ traces and density, see also Brezis--Mironescu~\cite{BM1,BM2}.


\section{Kernel representation in $W^{s,p}$, $0<s\le \frac12$, $1<p<\infty$}

We extend the fundamental $1$-form to the Banach setting
$W^{s,p}(\mathbb S^1;\mathbb R^2)$, $0<s\le \tfrac12$, $1<p<\infty$.
Throughout, $\langle\cdot,\cdot\rangle$ denotes the Euclidean scalar product on $\mathbb R^2$,
$J=\left( \begin{array}{cc}0&-1\\ 1&0 \end{array} \right)$, and
\[
[\gamma]_{W^{s,p}}^p
:=\iint_{\mathbb S^1\times\mathbb S^1}
\frac{|\gamma(t)-\gamma(\tau)|^p}{|e^{it}-e^{i\tau}|^{1+sp}}\,dt\,d\tau .
\]
When needed, we tacitly replace $\gamma$ by $\gamma-\dashint_{\mathbb S^1}\gamma$
to avoid issues with constants (the difference quotient kills constants anyway).

\subsection{Definition via duality}
For $0<s\le \tfrac12$ and $1<p<\infty$, the distributional derivative
$\partial_t:W^{s,p}\to W^{s-1,p}\subset W^{-s,p}$ is continuous, and the Hilbert transform
$\mathcal H$ is bounded on $L^p$ as well as on $W^{s,p}$.
We define, for $\gamma\in W^{s,p}(\mathbb S^1;\mathbb R^2)$,
\begin{equation}\label{eq:Theta-sp-dual}
\Theta_{s,p}(\gamma)\in\big(W^{s,p'}(\mathbb S^1;\mathbb R^2)\big)^\ast,\qquad
\Theta_{s,p}(\gamma)[h]:=\big\langle J\,\partial_t\gamma,\ h\big\rangle_{W^{-s,p},\,W^{s,p'}}
\end{equation}
for $h\in W^{s,p'}(\mathbb S^1;\mathbb R^2)$, where $1/p+1/p'=1$.
Equivalently, using the Fourier multiplier $|D|^s$ and $\partial_t=\mathcal H|D|$ (on mean-zero components),
\begin{equation}\label{eq:Theta-sp-mult}
\Theta_{s,p}(\gamma)[h]
= c_s\int_{\mathbb S^1}\!\!
\big\langle \mathcal H\,|D|^{s}\gamma(t),\,J^\top\,|D|^{s}h(t)\big\rangle\,dt .
\end{equation}
By boundedness of $\mathcal H$ and $|D|^s$ on $L^p$, we have the continuity estimate
\begin{equation}\label{eq:Theta-sp-bound}
|\Theta_{s,p}(\gamma)[h]|
\le C\,\|\gamma\|_{W^{s,p}}\ \|h\|_{W^{s,p'}} .
\end{equation}

\subsection{Gagliardo–type kernel representation}
The next theorem gives an intrinsic bilinear kernel representation paralleling the
$H^{1/2}$ case.

\begin{theorem}\label{th:Theta-sp-kernel}
Let $0<s\le \tfrac12$ and $1<p<\infty$. There exists a nonzero constant
$C_{s,p}$ (depending only on $(s,p)$ and on the normalization of arc-length on $\mathbb S^1$) such that
for every $\gamma\in W^{s,p}(\mathbb S^1;\mathbb R^2)$ and $h\in W^{s,p'}(\mathbb S^1;\mathbb R^2)$,
\begin{equation}\label{eq:Theta-sp-kernel}
\Theta_{s,p}(\gamma)[h]
\;=\; C_{s,p}\iint_{\mathbb S^1\times\mathbb S^1}
\frac{\big\langle \gamma(t)-\gamma(\tau),\ J\big(h(t)-h(\tau)\big)\big\rangle}{|e^{it}-e^{i\tau}|^{1+sp}}
\,dt\,d\tau ,
\end{equation}
where the double integral is well defined as an absolutely convergent improper
integral and satisfies
\begin{equation}\label{eq:Theta-sp-kernel-bound}
\left|\iint \frac{\langle \gamma(t)-\gamma(\tau),\,J(h(t)-h(\tau))\rangle}
{|e^{it}-e^{i\tau}|^{1+sp}}\,dt\,d\tau\right|
\;\le\; [\gamma]_{W^{s,p}}\,[h]_{W^{s,p'}} .
\end{equation}
\end{theorem}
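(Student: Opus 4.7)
The plan is to mimic, in the Banach setting, the $H^{1/2}$ derivation of Lemma~\ref{lem:scalar-kernel}: reduce to smooth mean-zero loops by density, establish the kernel identity there via Fourier analysis of the periodized Riesz kernel, and promote the identity to $W^{s,p}\times W^{s,p'}$ by the joint continuity that \eqref{eq:Theta-sp-kernel-bound} provides. Once smooth $\gamma,h$ are taken with zero mean, both sides of \eqref{eq:Theta-sp-kernel} are well-defined classical expressions, and the equality becomes an identity of bilinear forms on $C^\infty(\mathbb S^1;\mathbb R^2)$.

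For the scalar kernel identity on mean-zero smooth $u,v$, I would invoke the periodization calculation already sketched for Proposition~\ref{prop:p2}, with the exponent $2s$ replaced by $sp$: the periodization of $|x|^{-(1+sp)}$ is an even singular kernel on $\mathbb S^1$ whose Fourier coefficients are proportional to $|k|^{sp}$ (one-dimensional Riesz potential, Stein \cite{SteinSI} Ch.~V). Polarization of the resulting quadratic form yields
\[
\iint_{\mathbb S^1\times\mathbb S^1}\frac{(u(t)-u(\tau))(v(t)-v(\tau))}{|e^{it}-e^{i\tau}|^{1+sp}}\,dt\,d\tau \;=\; c_{s,p}\,\langle|D|^{sp}u,v\rangle_{L^2}
\]
on mean-zero smooth scalars. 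Extending bilinearly to $\mathbb R^2$-valued fields and inserting the matrix $J$ produces the antisymmetric combination $\gamma_1 h_2 - \gamma_2 h_1$ appearing in the numerator of \eqref{eq:Theta-sp-kernel}; matching with the multiplier expression \eqref{eq:Theta-sp-mult} via $\partial_t=\mathcal H|D|$ identifies the nonzero constant $C_{s,p}$, the overall sign being absorbed as in the $p=2$ case.

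The estimate \eqref{eq:Theta-sp-kernel-bound} is what promotes the identity from $C^\infty$ to $W^{s,p}\times W^{s,p'}$, and it is where I expect the main obstacle to lie. The natural approach is H\"older with exponents $p$ and $p'$, writing
\[
\frac{1}{|e^{it}-e^{i\tau}|^{1+sp}} \;=\; \frac{1}{|e^{it}-e^{i\tau}|^{s+1/p}}\cdot\frac{1}{|e^{it}-e^{i\tau}|^{1+sp-s-1/p}},
\]
where the first factor combines with $|\gamma(t)-\gamma(\tau)|$ to reproduce $[\gamma]_{W^{s,p}}$. The awkwardness is that the second factor, combined with $|h(t)-h(\tau)|^{p'}$, produces $[h]_{W^{s(p-1),p'}}$ rather than $[h]_{W^{s,p'}}$; the two Gagliardo scales $1+sp$ and $1+sp'$ do not coincide when $p\ne 2$.

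To close this, I would split cases: for $1<p\le 2$ one has $s(p-1)\le s$ and the continuous embedding $W^{s,p'}\hookrightarrow W^{s(p-1),p'}$ absorbs the mismatch with a harmless constant; for $p\ge 2$ the symmetric H\"older split (with the roles of $\gamma$ and $h$ exchanged) handles the complementary range by the same argument after duality. Combining the two gives \eqref{eq:Theta-sp-kernel-bound} with an implicit multiplicative constant depending only on $(s,p)$, harmlessly absorbed into the redefinition of $C_{s,p}$. Finally, the density of $C^\infty$ in both $W^{s,p}$ and $W^{s,p'}$, combined with \eqref{eq:Theta-sp-bound} on the left and the just-established bound on the right, upgrades the smooth identity \eqref{eq:Theta-sp-kernel} to arbitrary $\gamma\in W^{s,p}$ and $h\in W^{s,p'}$, completing the proof.
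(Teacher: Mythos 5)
Your strategy (periodization/Fourier for smooth mean-zero loops, then density via the H\"older bound) is essentially the paper's own, and you correctly isolate the H\"older step as the weak point; but neither of your repairs closes the argument. With kernel exponent $1+sp$, H\"older with exponents $(p,p')$ forces the split $(s+\tfrac1p)+(\tfrac1{p'}+s(p-1))$, so the second factor is $[h]_{W^{s(p-1),p'}}$, exactly as you observe. Your embedding fix works for $1<p\le 2$ (at the cost of a constant, so \eqref{eq:Theta-sp-kernel-bound} with constant $1$ is already lost), but for $p>2$ one has $s(p-1)>s$, so $W^{s(p-1),p'}$ is \emph{strictly smaller} than $W^{s,p'}$ and the embedding goes the wrong way. ``Exchanging the roles of $\gamma$ and $h$'' does not rescue this: the kernel exponent $1+sp$ is not symmetric under $p\leftrightarrow p'$, and the only exponent that produces $[\gamma]_{W^{s,p}}$ and $[h]_{W^{s,p'}}$ simultaneously is $(s+\tfrac1p)+(s+\tfrac1{p'})=1+2s$, which equals $1+sp$ only at $p=2$. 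For $p>2$ the double integral need not converge absolutely on $W^{s,p}\times W^{s,p'}$ (indeed for $sp\ge 2$ it already diverges for generic smooth loops, since the numerator is $O(|t-\tau|^2)$ near the diagonal), so the joint continuity your density step relies on is unavailable.

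The deeper problem is the ``matching'' step, which you inherit from the paper rather than repair. Your scalar identity $\iint\frac{(u(t)-u(\tau))(v(t)-v(\tau))}{|e^{it}-e^{i\tau}|^{1+sp}}\,dt\,d\tau=c_{s,p}\langle|D|^{sp}u,v\rangle_{L^2}$ is correct, but it produces a form with \emph{even} symbol $|k|^{sp}$, symmetric in $(u,v)$; inserting the antisymmetric matrix $J$ then yields a bilinear form that is antisymmetric under $\gamma\leftrightarrow h$ (set $h=\gamma$: the numerator $\langle a,Ja\rangle$ vanishes identically, so the right-hand side of \eqref{eq:Theta-sp-kernel} is $0$). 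By contrast $\Theta_{s,p}(\gamma)[h]=\langle J\partial_t\gamma,h\rangle$ is symmetric under $\gamma\leftrightarrow h$ and nonzero on the diagonal (for $\gamma=h=(\cos t,\sin t)$ it equals $-2\pi$). Moreover the symbols $|k|^{sp}$ and $-i\,\mathrm{sgn}(k)\,|k|^{2s}$ (the symbol coming from \eqref{eq:Theta-sp-mult} via $\partial_t=\mathcal H|D|$) differ both in parity and, for $p\ne2$, in homogeneity, so no choice of $C_{s,p}$ can reconcile them. A viable kernel representation must break the $t\leftrightarrow\tau$ symmetry, e.g.\ by placing a Hilbert transform on one argument as in the remark following Theorem~\ref{th:presymp-Hhalf}; as written, the identity you are asked to prove cannot be reached by this route, the defect originating already in Lemma~\ref{lem:scalar-kernel}, where $\sum_k\mathrm{sgn}(k)|k|\,\widehat u_k\overline{\widehat v_k}$ is silently identified with $\sum_k|k|\,\widehat u_k\overline{\widehat v_k}$.
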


\begin{proof}
\emph{Step 1: smooth case}. For $\gamma,h\in C^\infty(\mathbb S^1;\mathbb R^2)$ with mean zero
components, we use the singular integral representation of fractional derivatives
on the circle (see e.g.\ \cite[Th.\,V.5]{SteinSI}, \cite[Th.\,6.3.3]{GrafakosCA}):
\[
|D|^{s}\gamma(t)=c_{s}\,\mathrm{p.v.}\!\int_{\mathbb S^1}
\frac{\gamma(t)-\gamma(\tau)}{|e^{it}-e^{i\tau}|^{1+s}}\,d\tau ,
\quad
|D|^{s}h(t)=c_{s}\,\mathrm{p.v.}\!\int_{\mathbb S^1}
\frac{h(t)-h(\tau)}{|e^{it}-e^{i\tau}|^{1+s}}\,d\tau .
\]
Since $\partial_t=\mathcal H|D|$ (on mean-zero) and $\mathcal H$ is skew-adjoint on $L^2$
and bounded on $L^p$, inserting these into \eqref{eq:Theta-sp-mult} and applying Fubini’s theorem
together with the classical kernel of $\mathcal H$ on $\mathbb S^1$ yields a bilinear form in
$(\gamma,h)$ proportional to the right-hand side of \eqref{eq:Theta-sp-kernel}, with denominator
$|e^{it}-e^{i\tau}|^{1+s}\cdot |e^{it}-e^{i\tau}|^{1+s}=|e^{it}-e^{i\tau}|^{2+2s}$,
followed by the action of $\mathcal H$ that effectively reduces the power by $1-s$,
giving the final exponent $1+sp$ once we place the construction in $L^p$–$L^{p'}$ duality.
A clean way to encode this is to start from \eqref{eq:Theta-sp-dual} and use the identity
$\langle J\partial_t\gamma, h\rangle
= c_{s,p}\iint \frac{\langle \gamma(t)-\gamma(\tau),J(h(t)-h(\tau))\rangle}{|e^{it}-e^{i\tau}|^{1+sp}}$,
whose verification reduces to Fourier multipliers when $p=2$ (see the $H^{1/2}$ case),
and then to boundedness and interpolation for $1<p<\infty$; see \emph{Step 2}.

\emph{Step 2: reduction by Fourier and interpolation.}
For $p=2$ and $s=\tfrac12$, the identity \eqref{eq:Theta-sp-kernel}
is precisely the $H^{1/2}$ kernel formula already proved (Fourier polarization),
with $C_{s,p}=C_{\mathbb S^1}\neq 0$.
For $p=2$ and $0<s\le \tfrac12$, the same Fourier calculation gives
\[
\Theta_{s,2}(\gamma)[h]
=\sum_{k\in\mathbb Z} (-i\,\mathrm{sgn}\,k)\,|k|^{2s}\,\widehat \gamma_k\cdot \overline{\widehat h_k},
\]
while the polarization identity for the $W^{s,2}$ Gagliardo form reads
\[
\sum_{k\in\mathbb Z}|k|^{2s}\,\widehat \gamma_k\cdot \overline{\widehat h_k}
\;=\; c_s\iint \frac{\big(\gamma(t)-\gamma(\tau)\big)\cdot\big(h(t)-h(\tau)\big)}
{|e^{it}-e^{i\tau}|^{1+2s}}\,dt\,d\tau .
\]
The composition with $\mathcal H$ introduces the factor $-i\,\mathrm{sgn}\,k$,
which corresponds to the skew part (matrix $J$) in physical space and leaves the exponent
$1+2s$ unchanged. Therefore \eqref{eq:Theta-sp-kernel} holds for $p=2$ with $1+sp=1+2s$.

For general $1<p<\infty$, we use the $L^p$-boundedness of $\mathcal H$ and the equivalence
$\|\,|D|^s u\|_{L^p}\asymp [u]_{W^{s,p}}$ (see \cite[Th.\,6.5.1]{GrafakosCA},
\cite[Prop.\,3.4, 3.6]{DiNezzaPV}). The bilinear form
\[
B_{s,p}(\gamma,h)
:=\iint \frac{\langle \gamma(t)-\gamma(\tau),\,J(h(t)-h(\tau))\rangle}{|e^{it}-e^{i\tau}|^{1+sp}}
\]
is well defined for smooth $\gamma,h$, extends by density to
$W^{s,p}\times W^{s,p'}$ (H\"older on the measure $d\mu = |e^{it}-e^{i\tau}|^{-(1+sp)}dtd\tau$ gives
\eqref{eq:Theta-sp-kernel-bound}), and by complex interpolation between the
$p=2$ identity and the boundedness estimates \eqref{eq:Theta-sp-bound} one gets equality
\eqref{eq:Theta-sp-kernel} for all $1<p<\infty$ (with a constant $C_{s,p}\neq 0$ depending only on $(s,p)$).

\emph{Step 3: density.} Finally, $C^\infty(\mathbb S^1)$ is dense in $W^{s,p}$, $W^{s,p'}$.
Both sides of \eqref{eq:Theta-sp-kernel} are continuous in $(\gamma,h)$ with respect to the
$W^{s,p}\times W^{s,p'}$ topology, hence the identity extends to all
$\gamma\in W^{s,p}$, $h\in W^{s,p'}$.
\end{proof}

\begin{remark}[Continuity and bounds]
The bound \eqref{eq:Theta-sp-kernel-bound} follows directly from H\"older's inequality:
\[
|B_{s,p}(\gamma,h)|
\le \Big(\iint \frac{|\gamma(t)-\gamma(\tau)|^p}{|e^{it}-e^{i\tau}|^{1+sp}}\Big)^{\!1/p}
\Big(\iint \frac{|h(t)-h(\tau)|^{p'}}{|e^{it}-e^{i\tau}|^{1+sp}}\Big)^{\!1/p'}
=[\gamma]_{W^{s,p}}\,[h]_{W^{s,p'}}.
\]
Thus $\Theta_{s,p}$ is a continuous bilinear map $W^{s,p}\times W^{s,p'}\to\mathbb R$.
\end{remark}

\begin{remark}[Endpoint $p=1$]
Our argument uses the $L^p$-boundedness of $\mathcal H$ and the equivalence
$\|\,|D|^s u\|_{L^p}\asymp [u]_{W^{s,p}}$, which both fail at the endpoint $p=1$.
Hence we restrict to $1<p<\infty$.
\end{remark}

\medskip
\noindent\textbf{References for this section.}
Boundedness of $\mathcal H$ and singular integral representations for $|D|^s$:
Stein~\cite{SteinSI}, Grafakos~\cite{GrafakosCA}. Fractional Sobolev spaces and Gagliardo forms:
Di~Nezza--Palatucci--Valdinoci~\cite{DiNezzaPV}. For $W^{s,p}$ traces and density, see also
Brezis--Mironescu~\cite{BM1,BM2}.

\section{The canonical presymplectic form via the $L^2$ pairing}

Let $\langle\cdot,\cdot\rangle$ be the Euclidean inner product on $\mathbb R^m$ (here $m=2$),
and write $\partial_t$ for the distributional derivative on $\mathbb S^1$.

\subsection{Definition of the 1-form $\Theta_0$ on weak loop spaces}

\begin{definition}[The $L^2$-primitive $\Theta_0$]\label{def:Theta0}
\leavevmode
\begin{itemize}
\item For $\gamma,h\in H^{1}(\mathbb S^1;\mathbb R^m)$, set
\[
\Theta_0(\gamma)[h]\;:=\;\int_{\mathbb S^1}\!\langle \gamma'(t),\,h(t)\rangle\,dt.
\]
\item For $\gamma,h\in H^{1/2}(\mathbb S^1;\mathbb R^m)$, define the continuous extension
\[
\boxed{\ \Theta_0(\gamma)[h]\;:=\;\langle \partial_t\gamma,\ h\rangle_{H^{-1/2},\,H^{1/2}}\ }.
\]
\item More generally, for $0<s\le \tfrac12$ and $1<p<\infty$, define
\[
\boxed{\ \Theta_{s,p}(\gamma)[h]\;:=\;\langle \partial_t\gamma,\ h\rangle_{W^{-s,p},\,W^{s,p'}}\ ,\qquad
\gamma\in W^{s,p},\ h\in W^{s,p'},\ }
\]
where $p'$ is the conjugate exponent.
\end{itemize}
\end{definition}

\begin{remark}
The definitions coincide on the overlaps: if $\gamma\in H^1$ then
$\partial_t\gamma\in L^2\subset H^{-1/2}$ and
$\Theta_0(\gamma)[h]=(\gamma',h)_{L^2}$ for $h\in L^2$.
Continuity follows from the bounded map $\partial_t:H^{1/2}\to H^{-1/2}$ (resp.\ $W^{s,p}\to W^{-s,p}$).
\end{remark}

\subsection{The differential $d\Theta_0$: the canonical presymplectic form}

\begin{theorem}[Canonical presymplectic form on $H^{1/2}$]\label{th:presymp-Hhalf}
On the Hilbert loop space $H^{1/2}(\mathbb S^1;\mathbb R^m)$,
the differential of the 1-form $\Theta_0$ is the continuous bilinear, skew form
\[
\boxed{\ d\Theta_0(\gamma)[h,k]\;=\;
\langle \partial_t h,\ k\rangle_{H^{-1/2},\,H^{1/2}}
-\langle \partial_t k,\ h\rangle_{H^{-1/2},\,H^{1/2}}\ ,\ }
\]
which may be written equivalently as
\[
d\Theta_0(\gamma)[h,k]\;=\;2\,\langle \partial_t h,\ k\rangle_{H^{-1/2},\,H^{1/2}}
\;=\;2c\int_{\mathbb S^1}\!\langle |D|^{1/2}h,\ \mathcal H\,|D|^{1/2}k\rangle\,dt,
\]
where $\mathcal H$ is the Hilbert transform and $|D|^{1/2}$ the fractional derivative.
In particular, $d\Theta_0$ is independent of $\gamma$ (translation-invariant) and defines
the \emph{canonical presymplectic form} on the loop space.
\end{theorem}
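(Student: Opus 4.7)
The plan is to reduce to the smooth dense subspace, compute the exterior derivative there by elementary means, and then extend by density using the continuity of $\partial_t:H^{1/2}\to H^{-1/2}$ established earlier. Since the loop space is a Hilbert vector space, tangent vectors may be taken as constant vector fields and the Cartan formula reduces to
\[
d\Theta_0(\gamma)[h,k] \;=\; \frac{d}{d\varepsilon}\Big|_0 \Theta_0(\gamma+\varepsilon h)[k] \;-\; \frac{d}{d\varepsilon}\Big|_0 \Theta_0(\gamma+\varepsilon k)[h].
\]
Because $\Theta_0(\gamma)[h]=\langle \partial_t\gamma,h\rangle_{H^{-1/2},H^{1/2}}$ is linear in $\gamma$, each directional derivative simply replaces $\gamma$ by the perturbation direction, yielding the first boxed formula
\[
d\Theta_0(\gamma)[h,k]=\langle\partial_t h,k\rangle_{H^{-1/2},H^{1/2}}-\langle \partial_t k,h\rangle_{H^{-1/2},H^{1/2}},
\]
which is manifestly independent of the base point $\gamma$.

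To recast this as $2\langle \partial_t h,k\rangle_{H^{-1/2},H^{1/2}}$, I would promote the integration-by-parts identity $\int h'k\,dt=-\int hk'\,dt$ from the smooth category to the $H^{-1/2}$--$H^{1/2}$ duality pairing. On $C^\infty$ this is a one-line Fourier computation: the symbol $in$ of $\partial_t$ is purely imaginary, so the Parseval expression $\sum_n in\,\widehat h_n\,\overline{\widehat k_n}$ is antisymmetric in $(h,k)$. Both sides of the claimed identity extend to continuous bilinear forms on $H^{1/2}\times H^{1/2}$ by boundedness of $\partial_t:H^{1/2}\to H^{-1/2}$, so density of $C^\infty(\mathbb S^1;\mathbb R^m)$ in $H^{1/2}$ closes the argument.

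For the Fourier-multiplier representation, I would use the factorization $\partial_t=\mathcal H|D|=\mathcal H\,|D|^{1/2}\,|D|^{1/2}$ valid on mean-zero components, combined with self-adjointness of $|D|^{1/2}$ and skew-adjointness of $\mathcal H$ on $L^2$ (both being real Fourier multipliers, the latter with symbol $-i\,\mathrm{sgn}(n)$). Moving one factor of $|D|^{1/2}$ across the pairing gives, on smooth data,
\[
\langle \partial_t h,k\rangle_{L^2} \;=\; \pm\,\bigl(|D|^{1/2}h,\ \mathcal H\,|D|^{1/2}k\bigr)_{L^2},
\]
with the sign and Fourier normalization absorbed into the constant $c$ inherited from Proposition~\ref{prop:p2}. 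Density of smooth loops transfers the equality to $H^{1/2}$.

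The only delicate point I foresee is ensuring that the skew-symmetry of $\partial_t$ passes cleanly from the symmetric $L^2$ inner product to the asymmetric $H^{-1/2}$--$H^{1/2}$ duality pairing: here one must explicitly invoke that this pairing is the continuous extension of the $L^2$ pairing, so that integration by parts on smooth data survives the limit. Beyond this bookkeeping step, nothing analytically new is required, since the boundedness of $\mathcal H$ and $|D|^s$, the density of $C^\infty$ in $H^{1/2}$, and the $H^{-s}$--$H^s$ duality have all been established in the preceding sections.
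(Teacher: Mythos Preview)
Your proposal is correct and follows essentially the same route as the paper: compute the exterior derivative on constant vector fields using linearity of $\Theta_0$ in $\gamma$, obtain the factor $2$ via skew-adjointness of $\partial_t$ (integration by parts on $\mathbb S^1$, extended by density from $C^\infty$ to the $H^{-1/2}$--$H^{1/2}$ duality), and then invoke the multiplier factorization $\partial_t=\mathcal H\,|D|$ on mean-zero components for the final representation. The only cosmetic difference is that you differentiate $\Theta_0$ directly in the duality pairing (exploiting its bilinearity) whereas the paper first restricts to smooth loops and writes the integrals explicitly before passing to the limit; both arguments rely on the same analytic ingredients (boundedness of $\partial_t:H^{1/2}\to H^{-1/2}$, density of $C^\infty$, and the Fourier identities for $\mathcal H$ and $|D|^{1/2}$).
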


\begin{proof}
First assume $\gamma,h,k\in C^\infty(\mathbb S^1;\mathbb R^m)$.
Since the underlying space is linear, we may regard $h,k$ as constant vector fields.
Then
\[
\frac{d}{d\varepsilon}\Big|_{\varepsilon=0}\Theta_0(\gamma+\varepsilon h)[k]
=\int_{\mathbb S^1}\!\langle h'(t),k(t)\rangle\,dt,\qquad
\frac{d}{d\varepsilon}\Big|_{\varepsilon=0}\Theta_0(\gamma+\varepsilon k)[h]
=\int_{\mathbb S^1}\!\langle k'(t),h(t)\rangle\,dt.
\]
Thus
\[
d\Theta_0(\gamma)[h,k]
=\int\langle h',k\rangle-\int\langle k',h\rangle
=2\int\langle h',k\rangle
=-2\int\langle h,k'\rangle,
\]
where we used integration by parts on $\mathbb S^1$.
This shows skew-symmetry and independence of $\gamma$.

For general $h,k\in H^{1/2}$, approximate in $H^{1/2}$ by smooth sequences
$h_n,k_n$; use the boundedness of $\partial_t:H^{1/2}\to H^{-1/2}$ and continuity of the
duality to pass to the limit, which yields
\[
d\Theta_0(\gamma)[h,k]
=\langle \partial_t h,\ k\rangle_{H^{-1/2},H^{1/2}}
-\langle \partial_t k,\ h\rangle_{H^{-1/2},H^{1/2}}.
\]
Finally, the Fourier identity $\partial_t=\mathcal H\,|D|$ on mean-zero components
gives the last expression with $|D|^{1/2}$ and $\mathcal H$.
\end{proof}

\begin{proposition}[Banach version]\label{cor:presymp-sp}
Let $0<s\le \tfrac12$ and $1<p<\infty$.
On $W^{s,p}(\mathbb S^1;\mathbb R^m)$, the differential of $\Theta_{s,p}$ is
\[
\boxed{\ d\Theta_{s,p}(\gamma)[h,k]\;=\;
\langle \partial_t h,\ k\rangle_{W^{-s,p},\,W^{s,p'}}
-\langle \partial_t k,\ h\rangle_{W^{-s,p},\,W^{s,p'}}\ ,\ }
\]
a continuous skew bilinear form on $W^{s,p}\times W^{s,p'}$,
independent of $\gamma$. When $p=2$ this coincides with Theorem~\ref{th:presymp-Hhalf}.
\end{proposition}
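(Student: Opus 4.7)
The plan is to imitate the Hilbert-space argument of Theorem~\ref{th:presymp-Hhalf} step by step, substituting the $W^{-s,p}$--$W^{s,p'}$ duality for the $H^{\pm 1/2}$ one and invoking the analytic ingredients recorded at the end of Section~2: boundedness of $\partial_t\colon W^{s,p}\to W^{-s,p}$ for $1<p<\infty$, density of $C^\infty(\mathbb S^1;\mathbb R^m)$ in both $W^{s,p}$ and $W^{s,p'}$, and continuity of the corresponding duality pairing.

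First I would handle smooth data $\gamma,h,k\in C^\infty(\mathbb S^1;\mathbb R^m)$. Viewing $h,k$ as constant vector fields on the affine Banach space $W^{s,p}$, and observing that $\Theta_{s,p}(\gamma)[k]=\langle \partial_t\gamma,k\rangle$ is \emph{linear} in $\gamma$, the two Gâteaux derivatives defining $d\Theta_{s,p}(\gamma)[h,k]$ collapse to the classical integrals $\int_{\mathbb S^1}\langle\partial_t h,k\rangle\,dt$ and $\int_{\mathbb S^1}\langle\partial_t k,h\rangle\,dt$. Subtracting them yields the skew expression in the statement and already proves independence of the base point $\gamma$; integration by parts on $\mathbb S^1$ confirms that in the smooth case the two terms are negatives of one another.

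Next comes the density extension. Given $h\in W^{s,p}$ and $k\in W^{s,p'}$, choose smooth approximants $h_n\to h$ in $W^{s,p}$ and $k_n\to k$ in $W^{s,p'}$; the continuity of $\partial_t$ into $W^{-s,p}$ (resp.\ $W^{-s,p'}$) together with the continuity of each duality pairing lets one pass to the limit in both $\langle \partial_t h_n,k_n\rangle_{W^{-s,p},W^{s,p'}}$ and $\langle \partial_t k_n,h_n\rangle_{W^{-s,p'},W^{s,p}}$. The uniform estimate
\[
|d\Theta_{s,p}(\gamma)[h,k]|\le C\bigl(\|h\|_{W^{s,p}}\|k\|_{W^{s,p'}}+\|k\|_{W^{s,p}}\|h\|_{W^{s,p'}}\bigr)
\]
then yields joint continuity of the extended bilinear form, and independence of $\gamma$ is preserved by the limit since the smooth-level expression is already $\gamma$-free.

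The main point of care — more bookkeeping than obstacle — is the asymmetric functional-analytic setup: $\Theta_{s,p}(\gamma)$ is valued in $(W^{s,p'})^\ast$, so the two summands of $d\Theta_{s,p}$ actually use \emph{different} dualities, $(W^{-s,p},W^{s,p'})$ and $(W^{-s,p'},W^{s,p})$, and literal antisymmetry in a single Banach space holds only on the intersection $W^{s,p}\cap W^{s,p'}$. At $p=2$ this intersection equals $H^s$, the two dualities merge, and the statement collapses to Theorem~\ref{th:presymp-Hhalf}, confirming the compatibility claimed at the end of the proposition.
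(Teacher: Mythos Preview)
Your proposal is correct and follows essentially the same approach as the paper's own proof, which reads in full: ``Same computation as in the smooth case, then pass to the limit using $\partial_t:W^{s,p}\to W^{-s,p}$ and the duality with $W^{s,p'}$.'' If anything, your version is more careful: the paper does not explicitly flag that the two summands of $d\Theta_{s,p}$ live in different dualities $(W^{-s,p},W^{s,p'})$ versus $(W^{-s,p'},W^{s,p})$, whereas you do.
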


\begin{proof}
Same computation as in the smooth case, then pass to the limit using
$\partial_t:W^{s,p}\to W^{-s,p}$ and the duality with $W^{s,p'}$.
\end{proof}

\begin{remark}[Non-degeneracy modulo constants]
On $H^{1/2}$ (and on $W^{s,2}$), $d\Theta_0$ is weakly non-degenerate modulo constant loops:
if $d\Theta_0(\gamma)[h,\cdot]\equiv 0$ then $\partial_t h=0$ in $H^{-1/2}$,
hence $h$ is constant. One usually quotients by constants to obtain a genuine symplectic form.
\end{remark}

\begin{remark}[Kernel representation in $H^{1/2}$]
Using $\partial_t=\mathcal H|D|$ and polarization of the Gagliardo form,
\[
d\Theta_0(\gamma)[h,k]
=2c\int_{\mathbb S^1}\!\langle |D|^{1/2}h,\ \mathcal H|D|^{1/2}k\rangle\,dt
= C\iint_{\mathbb S^1\times\mathbb S^1}
\frac{\big(h(t)-h(\tau)\big)\cdot\big(\mathcal H k(t)-\mathcal H k(\tau)\big)}
{|e^{it}-e^{i\tau}|^{2}}\,dt\,d\tau,
\]
which makes sense as a continuous bilinear form on $H^{1/2}\times H^{1/2}$.
\end{remark}
\section{The canonical presymplectic form via the \(L^{2}\) pairing}
\label{sec:presymp}

Let \(\langle\cdot,\cdot\rangle\) denote the Euclidean inner product on \(\mathbb R^{m}\) (here \(m=2\)).
For \(\gamma\in H^{1/2}(\mathbb S^{1};\mathbb R^{m})\) we define the 1‑form
\[
\Theta_{0}(\gamma)[h]
   :=\big\langle J\partial_{t}\gamma,\,h\big\rangle_{H^{-1/2},\,H^{1/2}},
   \qquad h\in H^{1/2}(\mathbb S^{1};\mathbb R^{m}).
\tag{4.1}
\]

\begin{theorem}[Canonical presymplectic form on \(H^{1/2}\)]
\label{th:presymp-H12}
The differential of \(\Theta_{0}\) is the continuous skew‑symmetric bilinear form
\[
d\Theta_{0}(\gamma)[h,k]
   =\big\langle\partial_{t}h,\,k\big\rangle_{H^{-1/2},\,H^{1/2}}
    -\big\langle\partial_{t}k,\,h\big\rangle_{H^{-1/2},\,H^{1/2}} .
\label{4.2}
\]
Equivalently,
\begin{equation}
d\Theta_{0}(\gamma)[h,k]
   =2\,\big\langle\partial_{t}h,\,k\big\rangle_{H^{-1/2},\,H^{1/2}}
   =2c_{\mathbb S^{1}}
     \int_{\mathbb S^{1}}\big\langle|D|^{1/2}h,\,
                               \mathcal H|D|^{1/2}k\big\rangle\,dt,
\label{4.3}
\end{equation}
where \(c_{\mathbb S^{1}}\neq0\) is the constant from Lemma~\ref{lem:scalar-kernel}.
In particular, \(d\Theta_{0}\) does not depend on \(\gamma\) and defines the canonical
presymplectic form on the loop space.
\end{theorem}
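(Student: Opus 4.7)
The plan is to mirror closely the proof of Theorem~\ref{th:presymp-Hhalf}, exploiting the fact that $\Theta_0(\gamma)[h]$ is continuous linear in $\gamma$ for each fixed $h$. This linearity collapses the exterior derivative computation to a routine symmetrization and shifts the entire analytic content onto a density argument.

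First I would work in the smooth category, viewing $C^\infty(\mathbb S^1;\mathbb R^m)$ as an affine Fréchet vector space and regarding $h,k$ as constant vector fields (so that $[h,k]=0$). The standard formula for the exterior derivative on a vector space,
\[
d\alpha(\gamma)[h,k] \;=\; (D_h\alpha)(\gamma)[k] - (D_k\alpha)(\gamma)[h],
\]
combined with the linearity of $\Theta_0$ in $\gamma$, immediately gives $(D_h\Theta_0)(\gamma)[k]=\Theta_0(h)[k]$ and the analogous statement with $h,k$ exchanged. Subtracting produces the claimed skew-symmetric bilinear expression and shows that $d\Theta_0$ is $\gamma$-independent. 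Integration by parts on $\mathbb S^1$ (no boundary term) further shows that the two duality pairings appearing in the difference coincide up to sign, which yields the factor $2$ in the equivalent form.

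Next I would extend the identity from smooth data to $H^{1/2}$. Density of $C^\infty(\mathbb S^1;\mathbb R^m)$ in $H^{1/2}$, together with continuity of $\partial_t:H^{1/2}\to H^{-1/2}$ and joint continuity of the $H^{-1/2}$--$H^{1/2}$ duality pairing, ensures that both sides of the identity are continuous bilinear on $H^{1/2}\times H^{1/2}$. Approximating $h,k$ by smooth sequences $h_n,k_n$ and passing to the limit establishes the formula on the full Hilbert loop space.

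To obtain the final expression as an $L^2$ integral of fractional quantities, I would use the Fourier-multiplier identity $\partial_t=\mathcal H\,|D|$ on mean-zero components, together with self-adjointness of $|D|^{1/2}$ and skew-adjointness of $\mathcal H$ in $L^2$. This rewrites $\langle\partial_t h,k\rangle_{H^{-1/2},H^{1/2}}$ as
\[
c_{\mathbb S^{1}}\int_{\mathbb S^{1}}\big\langle |D|^{1/2}h,\ \mathcal H\,|D|^{1/2}k\big\rangle\,dt,
\]
with the precise value of $c_{\mathbb S^{1}}$ supplied by Lemma~\ref{lem:scalar-kernel}. The main obstacle, such as it is, is purely bookkeeping: tracking which manipulations use the $L^2$ inner product and which use the $H^{-1/2}$--$H^{1/2}$ duality, and confirming that integration by parts survives at the duality-pairing level. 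Once the continuity of $\partial_t$ on fractional Sobolev spaces is invoked, no genuine analytic difficulty remains, and the theorem follows from the elementary smooth computation by density and continuity.
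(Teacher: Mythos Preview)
Your proposal is correct and follows essentially the same route as the paper's proof: compute the exterior derivative on smooth loops using constant vector fields and the linearity of $\Theta_0$ in $\gamma$, invoke integration by parts to obtain the factor $2$, extend to $H^{1/2}$ by density and the continuity of $\partial_t:H^{1/2}\to H^{-1/2}$, and finally rewrite the duality pairing via $\partial_t=\mathcal H|D|$ and Lemma~\ref{lem:scalar-kernel}. Your observation that linearity in $\gamma$ makes $(D_h\Theta_0)(\gamma)[k]=\Theta_0(h)[k]$ is exactly the content of the paper's directional-derivative computation, just phrased slightly more abstractly.
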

\begin{proof}
For smooth loops the computation is elementary:
\[
\frac{d}{d\varepsilon}\Big|_{\varepsilon=0}
   \Theta_{0}(\gamma+\varepsilon h)[k]
   =\int_{\mathbb S^{1}}\langle h',k\rangle\,dt,
\qquad
\frac{d}{d\varepsilon}\Big|_{\varepsilon=0}
   \Theta_{0}(\gamma+\varepsilon k)[h]
   =\int_{\mathbb S^{1}}\langle k',h\rangle\,dt.
\]
Subtracting yields \eqref{4.2}.  Passing to the limit for
\(h,k\in H^{1/2}\) uses the continuity of \(\partial_{t}:H^{1/2}\to H^{-1/2}\) and of the duality pairing.
The Fourier identity \(\partial_{t}= \mathcal H|D|\) on mean‑zero functions gives \eqref{4.3}
after invoking Lemma~\ref{lem:scalar-kernel}.
\end{proof}

\begin{proposition}[Banach‑space version]\label{cor:presymp-Wsp}
Let \(0<s\le\tfrac12\) and \(1<p<\infty\).  Define
\begin{equation}
\Theta_{s,p}(\gamma)[h]
   :=\big\langle J\partial_{t}\gamma,\,h\big\rangle_{W^{-s,p},\,W^{s,p'}},
   \qquad
\gamma\in W^{s,p},\;h\in W^{s,p'} .
\label{4.4}
\end{equation}
Then
\begin{equation}
d\Theta_{s,p}(\gamma)[h,k]
   =\big\langle\partial_{t}h,\,k\big\rangle_{W^{-s,p},\,W^{s,p'}}
    -\big\langle\partial_{t}k,\,h\big\rangle_{W^{-s,p},\,W^{s,p'}}
\label{4.5}
\end{equation}
defines a continuous skew‑symmetric bilinear form on
\(W^{s,p}\times W^{s,p'}\), independent of \(\gamma\).
For \(p=2\) this coincides with Theorem~\ref{th:presymp-H12}.
\end{proposition}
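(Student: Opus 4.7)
The plan is to mirror the proof of Theorem \ref{th:presymp-H12}, replacing the Hilbert pairing $\langle\cdot,\cdot\rangle_{H^{-1/2},H^{1/2}}$ by the Banach duality $\langle\cdot,\cdot\rangle_{W^{-s,p},W^{s,p'}}$. The key observation that drives the entire argument is that $\Theta_{s,p}(\gamma)$ depends linearly on $\gamma$ through the continuous map $\gamma \mapsto J\partial_{t}\gamma$, so the Gateaux derivative of $\Theta_{s,p}$ at any point coincides with that linear part. This reduces the proof to a classical smooth computation followed by a density extension, and no serious analytic obstacle arises beyond the bookkeeping of the duality pairs.

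First I would carry out the smooth calculation. For $\gamma, h, k \in C^\infty(\mathbb S^{1};\mathbb R^{m})$, the map $\varepsilon \mapsto \Theta_{s,p}(\gamma + \varepsilon h)[k]$ is affine in $\varepsilon$, so its derivative at $\varepsilon = 0$ reduces to the classical integral $\int_{\mathbb S^1} \langle J h'(t), k(t)\rangle \, dt$, and the analogous calculation in the $k$-direction gives $\int_{\mathbb S^1} \langle J k'(t), h(t)\rangle \, dt$. Subtracting yields \eqref{4.5} in the smooth category; the absence of $\gamma$ on the right-hand side establishes the $\gamma$-independence, while bilinearity and skew-symmetry are manifest.

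Second, I would extend to the full Banach product by density and continuity. The required analytic facts are already in place: $\partial_{t}: W^{s,p} \to W^{s-1,p} \hookrightarrow W^{-s,p}$ is bounded for $0 < s \le 1/2$ and $1 < p < \infty$, the matrix $J$ acts as a bounded constant operator on every $W^{s,p}$, and the duality pairing $W^{-s,p} \times W^{s,p'} \to \mathbb R$ is jointly continuous. Each term in \eqref{4.5} is therefore a jointly continuous bilinear map on its natural pair of factors, and density of $C^\infty(\mathbb S^1;\mathbb R^m)$ in both $W^{s,p}$ and $W^{s,p'}$ lets one pass to the limit from the smooth identity.

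The only delicate point, which I would address as a short remark rather than a separate step, is the interpretation of skew-symmetry on the asymmetric product $W^{s,p} \times W^{s,p'}$ when $p \neq 2$: literal antisymmetry under $h \leftrightarrow k$ is well defined only on the common dense subspace $W^{s,p} \cap W^{s,p'}$, which contains $C^\infty(\mathbb S^1;\mathbb R^m)$ and is therefore dense in each factor. On that subspace \eqref{4.5} changes sign under transposition by inspection, so the extension is unambiguous. The consistency with Theorem \ref{th:presymp-H12} at $p = 2$ is then immediate, since both factors collapse to $H^{1/2}$ and the pairing reduces to the $H^{-1/2}$-$H^{1/2}$ duality, recovering the formula of that theorem.
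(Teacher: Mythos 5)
Your overall route is the same as the paper's: an affine-in-$\varepsilon$ Gateaux computation for smooth loops, followed by a density/continuity extension using the boundedness of $\partial_{t}:W^{s,p}\to W^{-s,p}$ and of the duality pairing. Your closing remark on how to make sense of skew-symmetry on the asymmetric product $W^{s,p}\times W^{s,p'}$ (via the dense intersection containing $C^\infty$) addresses a point the paper passes over in silence and is worth keeping.

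There is, however, one concrete slip in your smooth step. Taking the definition \eqref{4.4} literally, the derivative of $\varepsilon\mapsto\Theta_{s,p}(\gamma+\varepsilon h)[k]$ is indeed $\int_{\mathbb S^1}\langle Jh',k\rangle\,dt$, as you write; but subtracting $\int_{\mathbb S^1}\langle Jk',h\rangle\,dt$ does \emph{not} yield \eqref{4.5}. Since $J$ is constant and skew ($J^\top=-J$), integration by parts on $\mathbb S^1$ gives $\int\langle Jk',h\rangle=-\int\langle k',Jh\rangle=\int\langle k,Jh'\rangle=\int\langle Jh',k\rangle$, so your two terms are equal and their difference is identically zero --- not the $J$-free right-hand side of \eqref{4.5}. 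The discrepancy is inherited from the paper (the $J$ appears in \eqref{4.4} but not in \eqref{4.5}, and the paper's own proof of Theorem~\ref{th:presymp-H12} silently drops the $J$ in its smooth computation); the intended object is the $J$-free $1$-form $\Theta_{s,p}(\gamma)[h]=\langle\partial_t\gamma,h\rangle$ of Definition~\ref{def:Theta0}, for which your computation gives $\int\langle h',k\rangle-\int\langle k',h\rangle=2\int\langle h',k\rangle\neq 0$ and the rest of your argument goes through verbatim. As written, though, the assertion ``subtracting yields \eqref{4.5}'' is a non sequitur: you must either drop the $J$ from the computation (and say why), or note explicitly that with $J$ retained the resulting form vanishes and the stated conclusion fails.
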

\begin{proof}
Exactly the same argument as in the proof of Theorem~\ref{th:presymp-H12},
using the boundedness of \(\partial_{t}:W^{s,p}\to W^{-s,p}\) (see Appendix~\ref{app:analytic}),
yields \eqref{4.5}.
\end{proof}

\begin{remark}[Non‑degeneracy modulo constants]
On both \(H^{1/2}\) and \(W^{s,2}\) the form \(d\Theta_{0}\) (resp. \(d\Theta_{s,2}\))
is weakly non‑degenerate modulo constant loops: if
\(d\Theta_{0}(\gamma)[h,\cdot]\equiv0\) then \(\partial_{t}h=0\) in \(H^{-1/2}\), hence \(h\) is constant.
Quotienting by the subspace of constant loops yields a genuine symplectic form.
\end{remark}
\section{Mokhov‑type bicomplex and Poisson structures on weak Sobolev loop spaces}
\label{sec:Mokhov}

Throughout this section we fix \(0<s\le\tfrac12\) and \(1<p<\infty\).
Set
\[
E:=W^{s,p}(\mathbb S^{1};\mathbb R^{m}).
\]

\subsection{Admissible functionals}
\begin{definition}[Admissible functionals]\label{def:admissible}
\(\mathcal F\) denotes the class of functionals
\[
F(\gamma)=\int_{\mathbb S^{1}}f\bigl(\gamma(t)\bigr)\,dt,
\qquad
f\in C^{1,1}_{b}(\mathbb R^{m}),
\]
defined on \(E\).  Their variational derivative is
\(\delta F(\gamma)=\nabla f(\gamma)\in W^{s,p}(\mathbb S^{1};\mathbb R^{m})\).
\end{definition}

\subsection{Local (hydrodynamic‑type) Poisson operators}
\begin{theorem}[Local Mokhov operator]\label{th:local-P}
Let \(J:\mathbb R^{m}\to\mathfrak{so}(m)\) be Lipschitz bounded
(\(J\in C^{0,1}_{b}\)).  Define for \(\gamma\in E\) and
\(\xi\in W^{s,p'}(\mathbb S^{1};\mathbb R^{m})\)
\[
\mathcal P_{\rm loc}(\gamma)\,\xi
   := J(\gamma)\,\partial_{t}\xi .
\tag{5.1}
\]
Then
\begin{enumerate}[label=(\roman*)]
  \item \(\mathcal P_{\rm loc}(\gamma):W^{s,p'}\to W^{-s,p'}\) is bounded and skew‑adjoint.
  \item For \(F,G\in\mathcal F\) the bracket
  \[
  \{F,G\}_{\rm loc}(\gamma)
    :=\big\langle\mathcal P_{\rm loc}(\gamma)\,\delta G(\gamma),\,
               \delta F(\gamma)\big\rangle_{W^{-s,p'},\,W^{s,p}}
  \tag{5.2}
  \]
  is well defined and skew‑symmetric.
  \item If \(J\) satisfies Mokhov’s algebraic conditions (the vanishing of the
        Schouten bracket \([\mathcal P_{\rm loc},\mathcal P_{\rm loc}]=0\) in the smooth category),
        then \(\{\cdot,\cdot\}_{\rm loc}\) fulfills the Jacobi identity and defines a Poisson bracket on \(\mathcal F\).
\end{enumerate}
\end{theorem}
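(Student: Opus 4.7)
The plan is to handle the three items sequentially: (i) carries the analytic weight, (ii) follows from (i) together with an identification of variational derivatives, and (iii) is reduced to the smooth case by density.

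For (i), I would separate boundedness from skew-adjointness. For boundedness I decompose $\mathcal{P}_{\rm loc}(\gamma)$ as the two-step map $\xi \mapsto \partial_t \xi \mapsto J(\gamma)\,\partial_t \xi$: the first step is the continuous chain $W^{s,p'} \xrightarrow{\partial_t} W^{s-1,p'} \hookrightarrow W^{-s,p'}$ valid for $s \leq \tfrac12$, and the second is multiplication by the matrix field $J(\gamma)$. Because $J \in C^{0,1}_b(\mathbb{R}^m, \mathfrak{so}(m))$ and $\gamma \in W^{s,p}$, the Lipschitz Nemytski\u{i} estimate cited in Section 2 places $J(\gamma)$ in $W^{s,p} \cap L^\infty$, and the multiplier property on $W^{-s,p'}$ is obtained by transposing its counterpart on $W^{s,p}$. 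For skew-adjointness I would first treat smooth $\xi, \eta$, combining the pointwise antisymmetry $J(\gamma)^\top = -J(\gamma)$ with one integration by parts on $\mathbb{S}^1$, then extend to general tangent directions by continuity and density.

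Item (ii) is then essentially formal. The hypothesis $f \in C^{1,1}_b$ makes $\nabla f$ Lipschitz bounded, so the Nemytski\u{i} map $\gamma \mapsto \nabla f(\gamma)$ sends $W^{s,p}$ into itself; hence $\delta F(\gamma)$ and $\delta G(\gamma)$ are legitimate inputs for the duality pairing furnished by (i). The bracket is continuous in $\gamma$, and its skew-symmetry is immediate from the skew-adjointness of $\mathcal{P}_{\rm loc}(\gamma)$. For (iii), my strategy is a density reduction: Mokhov's algebraic assumption $[\mathcal{P}_{\rm loc}, \mathcal{P}_{\rm loc}] = 0$ delivers the Jacobi identity $\{F, \{G,H\}\} + \{G, \{H,F\}\} + \{H, \{F,G\}\} = 0$ for every $\gamma \in C^\infty(\mathbb{S}^1;\mathbb{R}^m)$. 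Each term of this triple sum is, for fixed admissible $F, G, H$, a continuous function of $\gamma \in W^{s,p}$ (Nemytski\u{i} for the variational derivatives, multiplier continuity for $J(\gamma)$, and boundedness of $\partial_t$), so the density of $C^\infty$ in $W^{s,p}$ propagates the identity to the full Sobolev range. Equivalently, one can phrase the Schouten vanishing in the horizontal-vertical bicomplex and invoke its $W^{s,p}$ extension, announced earlier in the section, to transfer the vanishing verbatim.

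The principal obstacle sits in the multiplier estimate for $J(\gamma)$ in step (i). At the regularity $0 < s \leq \tfrac12$ with $1 < p < \infty$ the space $W^{s,p}$ is not in general a Banach algebra (the threshold $sp > 1$ may fail), so one cannot harvest the bound from a product estimate. Instead it must be extracted directly from the Lipschitz Nemytski\u{i} lemma together with a duality transposition, with careful control of the operator norm in terms of $\|J\|_{C^{0,1}}$ and $\|\gamma\|_{W^{s,p}}$. Once this bound is in place, the remaining ingredients — skew-adjointness, continuity of the bracket, and the Jacobi density argument — follow by routine approximation.
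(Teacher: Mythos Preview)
Your proposal is correct and follows essentially the same route as the paper's sketch: reduce to Mokhov's smooth result, invoke the Lipschitz Nemytski\u{i} bound for $J(\gamma)$ together with the continuity of $\partial_t$ between the relevant Sobolev spaces, and then extend boundedness, skew-adjointness, and the Jacobi identity by density and continuity. Your explicit attention to the multiplier action of $J(\gamma)$ on the negative-order space $W^{-s,p'}$ via duality transposition is in fact more careful than the paper's proof sketch, which simply asserts boundedness of the composite without isolating this step.
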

\begin{proof}[Sketch]
For smooth loops the statement is precisely Mokhov’s construction
\cite{Mokhov1998,MokhovBook}.  The map
\(\gamma\mapsto J(\gamma)\) is a bounded Nemytskiĭ operator on \(W^{s,p}\) (Proposition~\ref{prop:nemytskii}),
while \(\partial_{t}:W^{s,p'}\to W^{-s,p'}\) is continuous.
Hence \(\mathcal P_{\rm loc}(\gamma)\) extends by density to all \(\gamma\in E\).
Skew‑adjointness follows from the antisymmetry of \(J\) and of \(\partial_{t}\) in the duality
\(W^{-s,p'}\times W^{s,p'}\).
The algebraic Jacobi identity is preserved under the limit because all
operations are continuous; thus the bracket satisfies the Jacobi identity.
\end{proof}

\subsection{Weakly non‑local Poisson operators}
\begin{theorem}[Weakly non‑local Mokhov operator]\label{th:nonlocal-P}
Let \(J\in C^{0,1}_{b}(\mathbb R^{m},\mathfrak{so}(m))\) and
\(A_{r},B_{r}\in C^{0,1}_{b}(\mathbb R^{m},\mathbb R^{m\times m})\) for
\(r=1,\dots,R\).  Define
\begin{equation}
\mathcal P_{\rm nl}(\gamma)\,\xi
   := J(\gamma)\,\partial_{t}\xi
      +\sum_{r=1}^{R}
         A_{r}(\gamma)\,
         \partial_{t}^{-1}\!
         \bigl(B_{r}(\gamma)^{\!\top}\xi\bigr),
\label{5.3}
\end{equation}
where \(\partial_{t}^{-1}:=\mathcal H|D|^{-1}\) is the inverse derivative on mean‑zero functions.
Then
\begin{enumerate}[label=(\roman*)]
  \item \(\mathcal P_{\rm nl}(\gamma):W^{s,p'}\to W^{-s,p'}\) is bounded.
  \item The bracket
  \begin{equation}
  \{F,G\}_{\rm nl}(\gamma)
    :=\big\langle\mathcal P_{\rm nl}(\gamma)\,\delta G(\gamma),\,
               \delta F(\gamma)\big\rangle_{W^{-s,p'},\,W^{s,p}}
  \label{5.4}
  \end{equation}
  is well defined and skew‑symmetric.
  \item Under Mokhov’s algebraic constraints on the collection
        \((J,A_{r},B_{r})\) (ensuring skew‑adjointness and the vanishing of the
        Schouten bracket), the bracket \(\{\cdot,\cdot\}_{\rm nl}\) satisfies the Jacobi identity.
\end{enumerate}
\end{theorem}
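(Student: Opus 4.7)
The plan is to establish (i)--(iii) in turn, leaning on the analytic backbone from the previous sections (boundedness of $\mathcal H$ and $|D|^\alpha$ on $W^{s,p}$ for $1<p<\infty$, Lipschitz Nemytski{\u i} operators, density of $C^\infty(\mathbb S^1;\mathbb R^m)$ in $W^{s,p}$) together with Mokhov's computation in the smooth category.

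For boundedness (i) I would split $\mathcal P_{\rm nl}$ into its local and weakly non-local summands. The local part $J(\gamma)\partial_t\xi$ is already handled by Theorem~\ref{th:local-P}(i). For each summand $A_r(\gamma)\,\partial_t^{-1}\bigl(B_r(\gamma)^\top\xi\bigr)$ I would chain the following bounds: multiplication by $B_r(\gamma)^\top$ is a bounded Nemytski{\u i} operator $W^{s,p'}\to W^{s,p'}$; after projection onto the mean-zero subspace (where $\partial_t^{-1}$ is defined), $\partial_t^{-1}=\mathcal H|D|^{-1}$ gains a full derivative and maps $W^{s,p'}$ boundedly into $W^{s+1,p'}\hookrightarrow W^{s,p'}$; and $A_r(\gamma)$ multiplies boundedly back into $W^{s,p'}$. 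The inclusion $W^{s,p'}\hookrightarrow W^{-s,p'}$ (valid since $s>0$) then delivers the target space, and summing over $r$ together with the local part yields the asserted bound.

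Well-posedness of $\{\cdot,\cdot\}_{\rm nl}$ in (ii) follows at once from (i) and the continuity of the duality pairing. Skew-symmetry is equivalent to the formal skew-adjointness $\mathcal P_{\rm nl}(\gamma)^\ast=-\mathcal P_{\rm nl}(\gamma)$. On smooth $\gamma,\xi$ this reduces to a short integration-by-parts calculation combining $J(\gamma)^\top=-J(\gamma)$, $\partial_t^\ast=-\partial_t$, $(\partial_t^{-1})^\ast=-\partial_t^{-1}$ on mean-zero functions, and Mokhov's algebraic compatibility relations between the matrix fields $A_r,B_r$ (which in practice force pairings of the form $A_r(\gamma)\,\partial_t^{-1}B_r(\gamma)^\top+B_r(\gamma)\,\partial_t^{-1}A_r(\gamma)^\top$ to be skew). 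The extension from smooth $\gamma$ to arbitrary $\gamma\in E$ is carried out by density using the continuity estimate of (i).

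The main obstacle is (iii), the Jacobi identity. My strategy is a continuity-plus-density argument. Given $F,G,H\in\mathcal F$ with densities $f,g,h\in C^{1,1}_b(\mathbb R^m)$, the inner bracket $\{G,H\}_{\rm nl}$ is not generally of the admissible form in $\mathcal F$, so I would not iterate the bracket abstractly but rather expand the Jacobiator
\[
\mathcal J_{F,G,H}(\gamma):=\{F,\{G,H\}_{\rm nl}\}_{\rm nl}(\gamma)+\text{cyclic}
\]
using Mokhov's explicit formula (from the bicomplex computation of the Schouten bracket $[\mathcal P_{\rm nl},\mathcal P_{\rm nl}]$), which exhibits $\mathcal J_{F,G,H}(\gamma)$ as a finite sum of polylinear expressions in the building blocks $J,A_r,B_r$ evaluated at $\gamma$ and paired against $\nabla f(\gamma),\nabla g(\gamma),\nabla h(\gamma)$ together with their $\partial_t^{-1}$-images. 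Each such summand is jointly continuous in $\gamma\in E$ by the same Nemytski{\u i} / Hilbert-transform / fractional-multiplier bounds used in (i). On the dense subspace $C^\infty(\mathbb S^1;\mathbb R^m)\subset E$ the Schouten bracket vanishes by the standing algebraic hypothesis (Mokhov's classical theorem), whence $\mathcal J_{F,G,H}\equiv 0$ on smooth loops, and continuity forces $\mathcal J_{F,G,H}\equiv 0$ on all of $E$. The delicate point I expect is tracking the regularity loss and gain in the terms where two successive applications of $\partial_t^{-1}$ are interleaved with Lipschitz Nemytski{\u i} operators acting on the gradients $\nabla f(\gamma)$ etc.; verifying that these nested compositions still define jointly continuous trilinear forms on $E\times E\times E$ is within the scope of the analytic lemmas but requires patient bookkeeping of exponents $s,\ p,\ p'$ at each step.
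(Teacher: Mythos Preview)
Your proposal is correct and follows essentially the same approach as the paper: boundedness of each summand via the Nemytski{\u i}/$\partial_t^{-1}$ chain you describe, skew-symmetry from the skew-adjointness of $\partial_t$ and $\partial_t^{-1}$ (plus the algebraic constraints on $A_r,B_r$), and the Jacobi identity transferred from the smooth setting by density and continuity. Your treatment of (iii)---noting that the inner bracket need not lie in $\mathcal F$ and hence expanding the Jacobiator explicitly as a continuous trilinear form---is in fact more careful than the paper's sketch, which simply invokes density of smooth loops and continuity of the operators involved.
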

\begin{proof}[Sketch]
For \(\xi\in W^{s,p'}\) we have \(B_{r}(\gamma)^{\!\top}\xi\in W^{s,p'}\) by the
Lipschitz multiplier property (Proposition~\ref{prop:nemytskii}).
The operator \(\partial_{t}^{-1}\) maps \(W^{s,p'}\) into \(W^{1+s,p'}\hookrightarrow
W^{s,p'}\) (Appendix~\ref{app:analytic}), and multiplication by the bounded
matrix \(A_{r}(\gamma)\) preserves \(W^{s,p'}\).  Hence each term in \eqref{5.3}
belongs to \(W^{-s,p'}\).  Skew‑symmetry follows from the antisymmetry of \(J\)
and the fact that \(\partial_{t}^{-1}\) is skew‑adjoint on \(L^{2}\) and bounded on
\(W^{s,p'}\).  The Jacobi identity is proved exactly as in Mokhov’s smooth
setting, using density of smooth loops and continuity of all operators involved.
\end{proof}

\subsection{Canonical constant case}
\begin{corollary}
If \(J_{0}\in\mathfrak{so}(m)\) is constant, then
\[
\{F,G\}(\gamma)
   :=\big\langle J_{0}\,\partial_{t}\delta G(\gamma),\,
              \delta F(\gamma)\big\rangle_{W^{-s,p'},\,W^{s,p}}
\]
defines a Poisson bracket on \(\mathcal F\).  The associated presymplectic form is
\[
\Omega(h,k)=\int_{\mathbb S^{1}}\langle J_{0}h(t),k(t)\rangle\,dt .
\]
\end{corollary}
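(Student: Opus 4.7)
The plan is to deduce this corollary as the specialization of Theorem~\ref{th:local-P} to the constant structure map $J(\gamma)\equiv J_{0}$, complemented by an explicit identification of the associated presymplectic form via the transgression of Proposition~\ref{prop:dTheta}.

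First I would observe that the constant map $\gamma\mapsto J_{0}$ trivially lies in $C^{0,1}_{b}(\mathbb R^{m},\mathfrak{so}(m))$, with Lipschitz constant zero. Parts (i) and (ii) of Theorem~\ref{th:local-P} then apply at once and yield the boundedness of $\mathcal P_{\rm loc}(\gamma)=J_{0}\partial_{t}:W^{s,p'}\to W^{-s,p'}$, its skew-adjointness (combining $J_{0}^{\top}=-J_{0}$ with the skew-adjointness of $\partial_{t}$ in the $W^{-s,p'}$--$W^{s,p'}$ duality via integration by parts on $\mathbb S^{1}$), and the skew-symmetry of the bracket $\{F,G\}$ on the admissible class $\mathcal F$.

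The substantive step is Jacobi, hypothesis (iii) of Theorem~\ref{th:local-P}. In Mokhov's algebraic criterion the Schouten bracket $[\mathcal P_{\rm loc},\mathcal P_{\rm loc}]$ of a hydrodynamic-type operator $J(\gamma)\partial_{t}$ is a polynomial expression in $J$ and its $\gamma$-derivatives; when $J\equiv J_{0}$ is constant, every derivative term vanishes identically, and the remaining constant-coefficient contributions cancel because $J_{0}$ commutes with $\partial_{t}$ and is antisymmetric. The Schouten obstruction therefore vanishes, and Theorem~\ref{th:local-P}(iii) delivers the Jacobi identity on $\mathcal F$. An alternative, directly verifiable route computes $\delta_{\gamma}\{G,H\}=\nabla^{2}h(\gamma)\,J_{0}\partial_{t}\nabla g(\gamma)+\nabla^{2}g(\gamma)\,J_{0}\partial_{t}\nabla h(\gamma)$, substitutes this into $\{F,\{G,H\}\}+\mathrm{cyclic}$, and uses integration by parts on $\mathbb S^{1}$ together with the symmetry of Hessians and the antisymmetry of $J_{0}$ to obtain the required cancellation.

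For the associated presymplectic form, the constant $2$-form $\omega_{0}(v,w):=\langle J_{0}v,w\rangle$ on $\mathbb R^{m}$ admits the primitive $\lambda_{x}[v]:=\tfrac12\langle J_{0}x,v\rangle$. Applying Proposition~\ref{prop:dTheta}, whose extension to the weak Sobolev setting is already provided by the remark following it (density of smooth loops and continuity of the Nemytski\u{\i} operator $\gamma\mapsto\lambda(\gamma)$), yields
\[
d\Theta_{\lambda}(\gamma)[h,k]=\int_{\mathbb S^{1}}\omega_{0}(h(t),k(t))\,dt=\int_{\mathbb S^{1}}\langle J_{0}h(t),k(t)\rangle\,dt=\Omega(h,k),
\]
with joint continuity on $W^{s,p}\times W^{s,p'}$ (in fact on $L^{p}\times L^{p'}$) immediate from H\"older's inequality and the boundedness of $J_{0}$. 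The only real obstacle here is bookkeeping: the Poisson operator $J_{0}\partial_{t}$ and the presymplectic form $\Omega$ are \emph{not} mutual inverses but rather two distinct transgressions of the same algebraic datum $\omega_{0}$, linked via the inverse derivative $\partial_{t}^{-1}$ on mean-zero loops; I would add a short clarifying sentence to avoid any ambiguity on this point.
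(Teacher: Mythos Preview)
Your proposal is correct and follows precisely the route the paper intends: the corollary is stated without proof in the paper, as an immediate specialization of Theorem~\ref{th:local-P} to the constant map $J(\gamma)\equiv J_0\in C^{0,1}_b$, with Mokhov's Schouten condition trivially satisfied since all $\gamma$-derivatives of $J$ vanish. Your identification of the presymplectic form via the transgression of Proposition~\ref{prop:dTheta} (extended from $\mathbb R^2$ to $\mathbb R^m$ with the constant $2$-form $\omega_0(v,w)=\langle J_0 v,w\rangle$) is also the natural argument, and your closing caveat on the non-inverse relationship between $J_0\partial_t$ and $\Omega$ is a helpful clarification that the paper itself leaves implicit.
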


\begin{remark}[Analytic ingredients]\label{rem:analytic}
The key analytical facts used above are:
\begin{itemize}
  \item \(\partial_{t}:W^{s,p}\to W^{-s,p}\) is continuous.
  \item The Hilbert transform \(\mathcal H\) and the fractional powers \(|D|^{\alpha}\) are bounded on \(W^{s,p}\) for \(0\le\alpha\le1\) and \(1<p<\infty\).
  \item Lipschitz functions act as multipliers on \(W^{s,p}\) for \(s\le1\) (Proposition~\ref{prop:nemytskii}).
\end{itemize}
\end{remark}

\begin{remark}[Invariance under reparametrisations]\label{rem:invariance}
The brackets \(\{\cdot,\cdot\}\) are invariant under the natural right action of
\(\mathrm{Diff}^{+}(\mathbb S^{1})\) or \(\mathrm{BiLip}^{+}(\mathbb S^{1})\) on loops,
since admissible functionals are integrals of densities of weight~1 and the
operators \(\partial_{t}\), \(\partial_{t}^{-1}\) transform covariantly.
\end{remark}

\subsection{Horizontal and vertical differentials}
\begin{definition}[Horizontal differential]\label{def:dH}
For an admissible density \(\mathcal L(\gamma)\) we set
\[
(d_{H}\mathcal L)(\gamma)
   :=\partial_{t}\bigl(\mathcal L(\gamma)\bigr),
\]
where \(\partial_{t}\) acts on each factor according to the Leibniz rule.
Integration by parts is understood in the duality
\(W^{-s,p}\times W^{s,p'}\).
\end{definition}

\begin{definition}[Vertical differential]\label{def:dV}
Let $F\in\mathcal F$ be as above.  Its vertical differential is the $1$‑form
\[
(d_{V}F)(\gamma)[h]
   :=\big\langle\delta F(\gamma),\,h\big\rangle_{W^{s,p},\,W^{s,p'}},
   \qquad h\in W^{s,p'}(\mathbb S^{1};\mathbb R^{m}),
\]
where $\delta F(\gamma)\in W^{s,p}(\mathbb S^{1};\mathbb R^{m})$ denotes the
variational derivative of $F$ (the Fréchet derivative with respect to the loop
variable).  In coordinates,
\[
\delta F(\gamma)(t)=\nabla f\bigl(\gamma(t)\bigr)
\quad\text{if}\quad
F(\gamma)=\int_{\mathbb S^{1}}f\bigl(\gamma(t)\bigr)\,dt .
\]

For higher‑degree admissible forms the vertical differential is defined by the
usual alternating polarization:
if $\Omega\in\Omega^{p,q}$ is a $(p,q)$‑form, then
\[
(d_{V}\Omega)(\gamma)[h_{0},\dots,h_{q}]
   =\sum_{i=0}^{q}(-1)^{i}\,
     \Omega(\gamma)[h_{0},\dots,\widehat{h_{i}},\dots,h_{q}],
\]
where the hat indicates omission of the $i$‑th argument.
\end{definition}

-----------------------------------------------------------------
\subsection{The bicomplex identities}
\label{subsec:bicomplex-identities}
The two differentials commute up to sign and satisfy the familiar nilpotency
relations.  The proof relies only on the boundedness of the operators listed in
$\mathcal A$ and on the density of smooth loops in $W^{s,p}$.

\begin{theorem}[Mokhov bicomplex on $W^{s,p}$]\label{th:bicomplex}
Let $0<s\le\tfrac12$ and $1<p<\infty$.  For every admissible form
$\Omega\in\Omega^{p,q}$ the operators $d_{H}$ and $d_{V}$ satisfy
\begin{equation}
d_{H}^{2}=0,\qquad d_{V}^{2}=0,\qquad d_{V}d_{H}+d_{H}d_{V}=0 .
\label{6.1}
\end{equation}
Consequently $(\Omega^{\bullet,\bullet},d_{H},d_{V})$ is a double complex.
\end{theorem}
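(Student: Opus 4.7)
The strategy is to reduce the three identities to the smooth category, where they are classical, and then extend by density and continuity using the analytic backbone of Remark~\ref{rem:analytic}.

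First I would verify that $d_{H}$ and $d_{V}$ act continuously between admissible form spaces in the $W^{s,p}$ topology. Since any admissible $(p,q)$-form is built from compositions $f\circ\gamma$ with $f\in C^{1,1}_{b}$ (continuous Nemytskiĭ operators on $W^{s,p}$ by Proposition~\ref{prop:nemytskii}), from polynomial combinations of $\partial_{t}$-derivatives of $\gamma$, and from multilinear pairings in the tangent variables $h_{i}\in W^{s,p'}$, the continuity of $d_{H}:\Omega^{p,q}\to\Omega^{p+1,q}$ follows from the boundedness $\partial_{t}:W^{s,p}\to W^{-s,p}$, and that of $d_{V}:\Omega^{p,q}\to\Omega^{p,q+1}$ follows from the boundedness of the Fréchet derivative $\gamma\mapsto\nabla f(\gamma)$ as a Lipschitz multiplier on $W^{s,p}$. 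Both operators therefore extend by density from $C^{\infty}(\mathbb S^{1};\mathbb R^{m})$ to $E$.

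Second, I would establish \eqref{6.1} on smooth loops, which is the classical variational bicomplex computation of Anderson/Vinogradov adapted by Mokhov. The identity $d_{H}^{2}=0$ reflects that $d_{H}$ carries a formal factor $dt$ and, on coefficients, applies $\partial_{t}$: the wedge $dt\wedge dt=0$ combined with the symmetry of $\partial_{t}\partial_{t}$ forces antisymmetrization to vanish. The identity $d_{V}^{2}=0$ is the de~Rham-type nilpotency of the skew-symmetrized Fréchet derivative; after polarization it reduces to the a.e.\ symmetry of the Hessians of the generating functions $f\in C^{1,1}_{b}$ (Rademacher/Alexandrov regularity). The mixed anticommutation $d_{H}d_{V}+d_{V}d_{H}=0$ holds because the horizontal derivation $\partial_{t}$ (in the spatial variable $t\in\mathbb S^{1}$) commutes with the vertical Fréchet derivation (in the loop variable $\gamma$) — this is Schwarz's theorem applied to smooth maps $(t,\varepsilon)\mapsto f(\gamma(t)+\varepsilon h(t))$ — and the conventional sign appears when the degree-one operator $d_{H}$ is moved past $d_{V}$.

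Third, for $\gamma\in E$ I would approximate by a sequence of smooth loops $\gamma_{n}\to\gamma$ in $W^{s,p}$, and for tangent vectors take $h_{i,n}\to h_{i}$ smooth approximants in $W^{s,p'}$. By the continuity of every operation entering $d_{H}$ and $d_{V}$ established in the first step, both sides of each identity in \eqref{6.1}, evaluated on $(\gamma_{n};h_{0,n},\dots,h_{q+2,n})$, converge to the corresponding expressions at $(\gamma;h_{0},\dots,h_{q+2})$. Since each identity is satisfied along the approximating sequence, it passes to the limit.

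The main obstacle is the bookkeeping in the anticommutation $d_{H}d_{V}+d_{V}d_{H}=0$: the composition $d_{V}d_{H}$ generates terms in which $\partial_{t}$ acts on a Nemytskiĭ composition $\nabla f(\gamma)$, producing an element of $W^{-s,p}$ that must be paired against tangent vectors in $W^{s,p'}$. One must check that interchanging $\partial_{t}$ with the $\varepsilon$-derivative defining $d_{V}$ is legitimate at the level of $W^{s,p}$-valued curves; this is guaranteed precisely by the range $0<s\le\tfrac12$, $1<p<\infty$, because in this regime $\partial_{t}$, $|D|^{\alpha}$, $\mathcal H$, and Lipschitz Nemytskiĭ operators are all bounded, so a Fubini-type argument on the smooth approximants together with passage to the limit yields the required equality. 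Everything else is routine once this commutation is justified.
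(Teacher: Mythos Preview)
Your proposal is correct and follows essentially the same route as the paper: verify the identities in the smooth category by appeal to Mokhov's classical bicomplex, check that $d_H$ and $d_V$ are continuous on admissible forms using the boundedness of $\partial_t$, $|D|^\alpha$, $\mathcal H$ and Lipschitz Nemytski\u\i\ operators on $W^{s,p}$, and then pass to the limit by density of smooth loops. The only minor difference is in the justification of the anticommutation $d_Hd_V+d_Vd_H=0$: the paper attributes it to the integration-by-parts identity $\langle\partial_t u,v\rangle=-\langle u,\partial_t v\rangle$ in the $W^{-s,p}\times W^{s,p'}$ duality, whereas you frame it as a Schwarz-type commutation of $\partial_t$ with the Fr\'echet $\varepsilon$-derivative; these are two sides of the same coin and your formulation is the more standard one in the variational-bicomplex literature.
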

\begin{proof}[Sketch]
For smooth loops the identities are proved in Mokhov’s original work
\cite{Mokhov1998,MokhovBook}.  Let $\Omega$ be an admissible form and choose a
sequence $(\Omega_{n})_{n\ge1}$ of smooth admissible forms such that
$\Omega_{n}\to\Omega$ in the topology of $W^{s,p}$.  All building blocks of
$\mathcal A$—the derivative $\partial_{t}$, its inverse $\partial_{t}^{-1}$,
the fractional multipliers $|D|^{\alpha}$, the Hilbert transform $\mathcal H$,
and multiplication by Lipschitz maps—are bounded linear operators on
$W^{s,p}$ (see Appendix~\ref{app:analytic}).  Hence $d_{H}$ and $d_{V}$ are
continuous maps
\[
d_{H},d_{V}:\Omega^{p,q}\longrightarrow\Omega^{p+1,q}\ \text{or}\ 
\Omega^{p,q+1}.
\]
Since the identities hold for each smooth $\Omega_{n}$, passing to the limit
yields \eqref{6.1} for $\Omega$.  The mixed relation $d_{V}d_{H}+d_{H}d_{V}=0$
is precisely the integration‑by‑parts formula
\(\langle\partial_{t}u,v\rangle=-\langle u,\partial_{t}v\rangle\)
valid in the duality $W^{-s,p}\times W^{s,p'}$.
\end{proof}

-----------------------------------------------------------------
\subsection{Horizontal, vertical and total cohomology}
\label{subsec:cohomology}
Because $(\Omega^{\bullet,\bullet},d_{H},d_{V})$ is a double complex we can
define the usual cohomology groups.

\begin{definition}[Cohomology groups]\label{def:cohomology}
Let $\Omega^{p,q}$ denote the space of admissible $(p,q)$‑forms.
\begin{itemize}
  \item The **horizontal cohomology** is
    \[
      H^{p}_{H}
        :=\frac{\ker\bigl(d_{H}:\Omega^{p,\bullet}\to\Omega^{p+1,\bullet}\bigr)}
               {\operatorname{im}\bigl(d_{H}:\Omega^{p-1,\bullet}\to\Omega^{p,\bullet}\bigr)} .
    \]
  \item The **vertical cohomology** is
    \[
      H^{q}_{V}
        :=\frac{\ker\bigl(d_{V}:\Omega^{\bullet,q}\to\Omega^{\bullet,q+1}\bigr)}
               {\operatorname{im}\bigl(d_{V}:\Omega^{\bullet,q-1}\to\Omega^{\bullet,q}\bigr)} .
    \]
  \item The **total (variational) cohomology** is obtained from the total
    differential $d:=d_{H}+d_{V}$:
    \[
      H^{k}_{\mathrm{tot}}
        :=\frac{\ker\bigl(d:\Omega^{k}\to\Omega^{k+1}\bigr)}
               {\operatorname{im}\bigl(d:\Omega^{k-1}\to\Omega^{k}\bigr)},
      \qquad
      \Omega^{k}:=\bigoplus_{p+q=k}\Omega^{p,q}.
    \]
\end{itemize}
\end{definition}

\section{Application: hydrodynamic-type order-1 Poisson structures via bicohomology}

We illustrate how the bicomplex $(\Omega^{\bullet,\bullet},d_H,d_V)$ on
$W^{s,p}(\mathbb S^1;\mathbb R^m)$ controls the classification of
hydrodynamic-type, order-$1$ Poisson structures (Mokhov-type) through
the total bicohomology $H^2_{\rm tot}$.

\subsection{Hydrodynamic-type operators and their class}

Fix $0<s\le\tfrac12$, $1<p<\infty$.
Consider operators of the form
\begin{equation}\label{eq:hydro-operator}
\mathcal P(\gamma)\,\xi \;=\; J(\gamma)\,\partial_t\xi,
\qquad J:\mathbb R^m\to\mathfrak{so}(m),\ J\in C^{0,1}_b,
\end{equation}
acting on variational derivatives $\xi\in W^{s,p'}$.
They induce brackets on admissible functionals $F,G\in\mathcal F$ by
\[
\{F,G\}(\gamma)\;=\;\big\langle \mathcal P(\gamma)\,\delta G(\gamma),\ \delta F(\gamma)\big\rangle_{W^{-s,p'},\,W^{s,p}}.
\]

\begin{definition}[Bivector represented by $\mathcal P$]
Define a (vertical) $2$-form with values in horizontal densities
\[
\Pi_\gamma(h,k)\;:=\;\big\langle J(\gamma)\,\partial_t k,\ h\big\rangle_{W^{-s,p'},\,W^{s,p}},
\qquad h\in W^{s,p},\ k\in W^{s,p'}.
\]
Up to admissible integrations by parts, $\Pi$ determines a class
$[\Pi]\in \Omega^{1,2}/d(\cdot)$ that will represent a class in $H^2_{\rm tot}$.
\end{definition}

\subsection{Cocycle condition and Jacobi identity}

\begin{theorem}[Cocycle $\Leftrightarrow$ Jacobi]\label{th:cocycle-jacobi}
Let $\mathcal P$ be as in \eqref{eq:hydro-operator}, and let $\Pi$ be
its associated bivector. The following are equivalent:
\begin{enumerate}
\item The induced bracket $\{\cdot,\cdot\}$ on $\mathcal F$ satisfies the Jacobi identity.
\item The class $[\Pi]\in H^2_{\rm tot}$ is $d$-closed, i.e.\ $d\Pi=0$
in the total complex $(\Omega^\bullet,d=d_H+d_V)$.
\end{enumerate}
Moreover, if $\Pi=d\Upsilon$ for some admissible $1$-vector $\Upsilon$
(i.e.\ $[\Pi]=0$ in $H^2_{\rm tot}$), then $\mathcal P$ differs from
a trivial operator by a variational coboundary and induces the zero bracket on $\mathcal F$.
\end{theorem}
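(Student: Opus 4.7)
The plan is to leverage the bicomplex structure established in Theorem~\ref{th:bicomplex} and to recycle Mokhov's cohomological translation of the Schouten bracket into the vanishing of a total differential. Throughout, I regard $\mathcal{P}$ acting on variational derivatives as defining a bivector $\Pi$ of bidegree $(1,2)$ in the bicomplex, and I unpack the Jacobiator of the induced bracket in terms of $d\Pi = d_H\Pi + d_V\Pi$.

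First I would establish the master identity
\[
\{F,\{G,H\}\} + \text{cyclic permutations} \;=\; (d\Pi)\bigl(\delta F,\,\delta G,\,\delta H\bigr)
\]
(up to a fixed nonzero constant depending only on conventions) valid for all admissible $F,G,H\in\mathcal{F}$. For smooth loops this is Mokhov's classical calculation \cite{Mokhov1998,MokhovBook}: one applies $\mathcal{P}(\gamma)$ to the variational derivative of $\{G,H\}$, uses the Leibniz rule to distribute the resulting Fréchet derivative of $J(\gamma)\,\partial_t\delta H$ onto $J$ and onto $\delta H$, and sums cyclically. The terms in which the derivative hits $\delta H$ cancel pairwise after integration by parts on $\mathbb{S}^1$ and contribute the $d_H\Pi$ piece, while those hitting $J(\gamma)$ reassemble into $d_V\Pi$. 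In the weak Sobolev setting, every operation used --- pointwise evaluation of $\nabla f$, multiplication by the Lipschitz matrix $J(\gamma)$, the derivative $\partial_t$, and the duality pairing $W^{-s,p'}\times W^{s,p'}$ --- is continuous by Proposition~\ref{prop:nemytskii} and the analytic facts compiled in Section~\ref{sec:Mokhov}, so the identity passes from smooth $\gamma$ to general $\gamma\in E$ by density.

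With the master identity in hand, both directions of the equivalence are immediate. The Jacobi identity for $\{\cdot,\cdot\}$ on $\mathcal{F}$ amounts to the vanishing of $(d\Pi)(\delta F,\delta G,\delta H)$ for all $F,G,H\in\mathcal{F}$; since the set of variational gradients $\{\nabla f:f\in C^{1,1}_b(\mathbb{R}^m)\}$ is dense in the admissible test class (by mollification it separates points and spans a dense subspace of $W^{s,p}$), vanishing on such triples forces $d\Pi = 0$ as an admissible form, and conversely. For the coboundary statement, suppose $\Pi = d\Upsilon = d_H\Upsilon + d_V\Upsilon$ for some admissible $1$-vector $\Upsilon$. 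Evaluated on $F,G\in\mathcal{F}$, the $d_H\Upsilon$ contribution becomes the $t$-derivative of an admissible density, hence integrates to zero on $\mathbb{S}^1$; the $d_V\Upsilon$ contribution pairs trivially with $\delta F$ and $\delta G$ because each $\delta F$ is by construction $d_V$-closed and the bicomplex identity $d_V^2=0$ forces the pairing of a $d_V$-coboundary with $d_V$-cocycles to vanish. Hence $\{F,G\}\equiv 0$ on $\mathcal{F}$.

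The main obstacle is the master identity itself: carefully tracking the cyclic combination of Fréchet derivatives of $J(\gamma)$ along Hamiltonian flows, and then verifying that each integration by parts and each variational differentiation remains legitimate in the Banach duality $W^{-s,p'}\times W^{s,p'}$ rather than in the familiar $L^2$ pairing. Once that bookkeeping is in place, Theorem~\ref{th:bicomplex} guarantees that $d_H$ and $d_V$ act on admissible forms exactly as in the smooth category, so the cohomological translation of Jacobi into closedness becomes purely algebraic and carries over verbatim from Mokhov's framework.
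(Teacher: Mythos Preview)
Your approach mirrors the paper's own sketch: both defer to Mokhov's smooth-category equivalence (Jacobi $\Leftrightarrow$ vanishing Schouten bracket $\Leftrightarrow$ $d\Pi=0$) and then transport it to $W^{s,p}$ via density of smooth loops and continuity of $d_H,d_V$ on the admissible algebra (Theorem~\ref{th:bicomplex}). Your explicit ``master identity'' is just a concrete rendering of the Schouten mechanism the paper invokes in one line.

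One step should be reworked. For the coboundary clause you claim the $d_V\Upsilon$ piece ``pairs trivially with $\delta F$ and $\delta G$ because each $\delta F$ is $d_V$-closed and $d_V^2=0$ forces the pairing of a $d_V$-coboundary with $d_V$-cocycles to vanish.'' That reasoning is off: evaluating the $2$-form $d_V\Upsilon$ on a pair of arguments is not a (co)homology pairing, and $d_V$-closedness of $\delta F$ does not by itself kill $(d_V\Upsilon)(\delta F,\delta G)$ --- the Cartan formula $(d_V\Upsilon)(X,Y)=X\!\cdot\!\Upsilon(Y)-Y\!\cdot\!\Upsilon(X)-\Upsilon([X,Y])$ shows extra terms survive. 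The paper is admittedly no more precise here (it says only that exactness ``corresponds to a trivial deformation of the bracket''; cf.\ also the integration-by-parts remark in the proof of Theorem~\ref{th:classification}), but the honest argument uses that $\delta F=d_VF$ is $d_V$-\emph{exact}: one writes the bracket as a pairing of $d\Upsilon$ with $d_VF\wedge d_VG$, moves $d$ across the pairing (this is where skew-adjointness of $\partial_t$ in the $W^{-s,p}\times W^{s,p'}$ duality and integration over $\mathbb S^1$ enter), and is left with terms containing $d_V^2F$ or $d_H$ of a periodic density, both of which vanish. Your treatment of the $d_H\Upsilon$ piece is fine.
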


\begin{proof}[Sketch]
As in Mokhov's smooth theory, the Jacobi identity is equivalent to the vanishing of
the Schouten bracket $[\Pi,\Pi]=0$. In the bicomplex formalism,
$[\Pi,\Pi]=0$ is encoded by $d\Pi=0$ in the total complex:
$d_V$ captures variational antisymmetry and $d_H$ the total derivatives in $t$.
Continuity of $d_H,d_V$ on the admissible algebra and density
of smooth loops in $W^{s,p}$ transfer Mokhov's proof verbatim.
Exactness $\Pi=d\Upsilon$ corresponds to a change of representative by a variational
coboundary, hence to a trivial deformation of the bracket (cohomologically zero).
\end{proof}

\subsection{Invariance and reduction}

\begin{proposition}[Invariance under reparametrizations]
Let $G$ be $\mathrm{Diff}^+(\mathbb S^1)$ or $\mathrm{BiLip}^+(\mathbb S^1)$ acting by
precomposition. If $J$ in \eqref{eq:hydro-operator} is independent of $t$ and depends only on $\gamma$,
then $\Pi$ and its class $[\Pi]\in H^2_{\rm tot}$ are $G$-invariant.
\end{proposition}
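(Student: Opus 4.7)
The plan is to prove pointwise invariance of the bivector $\Pi$ first and then descend to cohomology. Let me unpack the action: for $\phi\in G$, the action on loops is $\phi\cdot\gamma:=\gamma\circ\phi$, which by the chain rule induces on tangent vectors the pullback $\phi^{*}h:=h\circ\phi$ (the tangent space is linear). The hypotheses are that $J$ depends only on $\gamma$ (no explicit $t$-dependence), and that admissible functionals are integrals of densities of weight one. What must be shown is that $\Pi_{\phi\cdot\gamma}(\phi^{*}h,\phi^{*}k)=\Pi_{\gamma}(h,k)$ and, subsequently, that $\phi^{*}$ acts as a chain map on the bicomplex, so that $[\phi^{*}\Pi]=[\Pi]$ in $H^{2}_{\mathrm{tot}}$.

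The core computation is a change of variables. Starting from smooth $\gamma,h,k$, I would write
\[
\Pi_{\phi\cdot\gamma}(\phi^{*}h,\phi^{*}k)=\int_{\mathbb S^{1}}\bigl\langle J\bigl(\gamma(\phi(t))\bigr)\,\partial_{t}(k\circ\phi)(t),\ h(\phi(t))\bigr\rangle\,dt .
\]
The chain rule gives $\partial_{t}(k\circ\phi)(t)=\phi'(t)\cdot(\partial_{t}k)(\phi(t))$, and the substitution $\tau=\phi(t)$, $d\tau=\phi'(t)\,dt$ (valid because $\phi'>0$) turns the integrand into $\langle J(\gamma(\tau))\,\partial_{t}k(\tau),\,h(\tau)\rangle\,d\tau$, recovering $\Pi_{\gamma}(h,k)$. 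The decisive point is the cancellation of $\phi'$ coming from the chain rule against the Jacobian of the substitution: this is precisely the statement that $\partial_{t}$ transforms as a weight-one object on densities, which is the content of Remark~\ref{rem:invariance}. By density of $C^{\infty}$ in $W^{s,p}$ and the $W^{s,p'}$-continuity of both $\Pi_{\gamma}$ and $\phi^{*}$, the identity then extends to all admissible $\gamma,h,k$.

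For the descent to cohomology, once the pointwise identity $\phi^{*}\Pi=\Pi$ is established, invariance of the class is immediate provided $\phi^{*}$ commutes with $d_{H}$ and $d_{V}$. Commutation with $d_{V}$ is automatic because $\phi^{*}$ is a pointwise linear substitution of the vertical variables. Commutation with $d_{H}$ reduces, again by the chain rule and substitution, to the fact that $\partial_{t}$ transforms covariantly under reparametrisations, so the induced map on $\Omega^{\bullet,\bullet}$ is a bicomplex morphism; hence it descends to an automorphism of $H^{2}_{\mathrm{tot}}$ which fixes $[\Pi]$.

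The main obstacle, and the only place where the two cases $\mathrm{Diff}^{+}$ and $\mathrm{BiLip}^{+}$ genuinely differ, is verifying that precomposition by $\phi$ is a bounded operator on $W^{s,p}(\mathbb S^{1};\mathbb R^{m})$ at low regularity $0<s\le\tfrac12$. For smooth $\phi$ this is standard. For bi-Lipschitz $\phi$, the argument I would give is a direct Gagliardo-seminorm estimate: changing variables $\sigma=\phi(t)$, $\rho=\phi(\tau)$ in $[h\circ\phi]_{W^{s,p}}^{p}$ and using the two-sided estimate $|e^{it}-e^{i\tau}|\asymp|e^{i\phi(t)}-e^{i\phi(\tau)}|$ (consequence of the bi-Lipschitz property on $\mathbb S^{1}$) together with $\phi',(\phi^{-1})'\in L^{\infty}$ (Rademacher), yields $[h\circ\phi]_{W^{s,p}}\asymp[h]_{W^{s,p}}$. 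The chain rule for $\partial_{t}$ and the substitution formula then hold in the distributional sense by approximation, and the computation above goes through verbatim.
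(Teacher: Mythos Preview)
Your proposal is correct and follows essentially the same approach as the paper's proof, namely the covariance of $\partial_t$ and the duality pairing under orientation-preserving reparametrizations, combined with the weight-$1$ nature of admissible densities. Your version is considerably more detailed than the paper's two-line argument: you spell out the change-of-variables computation, the chain-map property of $\phi^{*}$ for passing to cohomology, and---most usefully---the Gagliardo-seminorm estimate showing that precomposition by a bi-Lipschitz $\phi$ is bounded on $W^{s,p}$, a point the paper leaves implicit.
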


\begin{proof}
Admissible functionals in $\mathcal F$ are integrals of densities of weight $1$;
$\partial_t$ and the duality transform covariantly under $t\mapsto\phi(t)$.
Hence the bivector and the class are preserved.
\end{proof}

\begin{remark}[Nondegeneracy and constants]
The presymplectic form associated with $\partial_t$ is weakly nondegenerate modulo
constant loops. One may pass to the reduced space (quotient by constants) to obtain a genuine
symplectic form, leaving the classification unchanged.
\end{remark}

\subsection{Weakly non-local extensions}

The same scheme applies to weakly non-local operators
\[
\mathcal P_{\rm nl}(\gamma)\,\xi
= J(\gamma)\partial_t\xi+\sum_{r=1}^R A_r(\gamma)\,\partial_t^{-1}\!\big(B_r(\gamma)^\top\xi\big),
\]
with $A_r,B_r\in C^{0,1}_b$. The associated bivector $\Pi_{\rm nl}$ defines a class
$[\Pi_{\rm nl}]\in H^2_{\rm tot}$, and Theorems \ref{th:cocycle-jacobi}--\ref{th:classification}
remain valid under Mokhov's algebraic constraints on $(J,A_r,B_r)$.

\subsection{Classification modulo coboundaries}
\begin{theorem}[Classification of order‑1 Poisson structures]\label{th:classification}
Let \(\mathcal P\) and \(\widetilde{\mathcal P}\) be two Poisson operators of the
form \eqref{eq:hydro-operator} with admissible matrices \(J\) and
\(\widetilde J\).  If the associated bivectors satisfy
\([\Pi]=[\widetilde\Pi]\) in \(H^{2}_{\mathrm{tot}}\), then the brackets
\(\{\cdot,\cdot\}\) and \(\{\cdot,\cdot\}^{\!\sim}\) coincide on the space of
admissible functionals \(\mathcal F\).  Conversely, non‑cohomologous classes give
distinct Poisson brackets.
\end{theorem}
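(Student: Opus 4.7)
The plan is to establish the equivalence through two implications, building on Theorem~\ref{th:cocycle-jacobi} for the easy direction and using a variational Poincar\'e argument adapted to $W^{s,p}$ for the harder converse.

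For the first implication, I would assume $[\Pi]=[\widetilde\Pi]$ in $H^{2}_{\mathrm{tot}}$, choose an admissible $1$-vector $\Upsilon$ with $\Pi-\widetilde\Pi=d\Upsilon=(d_{H}+d_{V})\Upsilon$, and invoke the second assertion of Theorem~\ref{th:cocycle-jacobi}, which guarantees that a $d$-exact bivector induces the zero bracket on $\mathcal F$. Bilinearity of the bracket construction then yields $\{F,G\}=\{F,G\}^{\sim}$ for every pair $(F,G)\in\mathcal F\times\mathcal F$, with all pairings well defined in the $W^{-s,p'}\times W^{s,p}$ duality thanks to the continuity statements collected in Remark~\ref{rem:analytic}.

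For the converse, suppose $\{F,G\}=\{F,G\}^{\sim}$ on $\mathcal F$ and set $\Delta\mathcal P:=\mathcal P-\widetilde{\mathcal P}$. The hypothesis reads
\[
\langle \Delta\mathcal P(\gamma)\,\delta G(\gamma),\ \delta F(\gamma)\rangle_{W^{-s,p'},\,W^{s,p}}=0
\qquad\forall\,F,G\in\mathcal F,\ \forall\gamma\in E.
\]
Since $\delta F(\gamma)(t)=\nabla f(\gamma(t))$ with $f\in C^{1,1}_{b}(\mathbb R^{m})$ arbitrary (and similarly for $\delta G$), translating and composing with bump functions in the target $\mathbb R^{m}$ yields a family of test gradients that separates elements of $W^{s,p}$. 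I would then deduce that $\Delta\Pi$ annihilates every admissible pair of variational gradients, and apply the variational Poincar\'e lemma on $(\Omega^{\bullet,\bullet},d_{H},d_{V})$ built from Theorem~\ref{th:bicomplex} to produce an admissible $1$-vector $\Upsilon$ with $\Delta\Pi=d\Upsilon$, so that $[\Pi]=[\widetilde\Pi]$ in $H^{2}_{\mathrm{tot}}$.

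The main obstacle is the variational Poincar\'e step at the $W^{s,p}$ level. Mokhov's smooth homotopy operator is built from $\partial_{t}^{-1}=\mathcal H|D|^{-1}$ and Nemytskiĭ compositions, and one must verify that it carries admissible $2$-vectors to admissible $1$-vectors without loss of regularity. This reduces to the boundedness of $\partial_{t}^{-1}$, $\mathcal H$, and $|D|^{\alpha}$ on $W^{s,p}$ (Remark~\ref{rem:analytic}) together with the Lipschitz multiplier property---precisely the analytic backbone developed in Sections~2--5---so that the resulting $\Upsilon$ lies in the admissible class. Once this extension of the smooth homotopy is secured, the equivalence in the statement follows at once, giving the desired bijective correspondence between classes in $H^{2}_{\mathrm{tot}}$ and equivalence classes of admissible order-one Poisson brackets.
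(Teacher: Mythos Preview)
Your forward implication is essentially the paper's: once $[\Pi]=[\widetilde\Pi]$, write the difference as $d\Upsilon$ and observe that the resulting coboundary contribution to the bracket vanishes on $\mathcal F$ (the paper phrases this as ``vanishes after integration by parts''; you route it through the second assertion of Theorem~\ref{th:cocycle-jacobi}). No issue there.

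The converse, however, has a genuine gap. You pass from ``$\Delta\Pi$ annihilates every admissible pair of variational gradients'' to ``$\Delta\Pi=d\Upsilon$'' by invoking a \emph{variational Poincar\'e lemma} on $(\Omega^{\bullet,\bullet},d_H,d_V)$. But a Poincar\'e-type statement produces a primitive for a $d$-closed form under a hypothesis of \emph{acyclicity}; it does not convert ``vanishes when paired against admissible gradients'' into exactness. If a Poincar\'e lemma held unconditionally at degree~$2$ in this bicomplex, one would have $H^{2}_{\mathrm{tot}}=0$ and the theorem would be vacuous. Your discussion of extending Mokhov's homotopy operator to $W^{s,p}$ via the boundedness of $\partial_t^{-1}$, $\mathcal H$, $|D|^{\alpha}$ addresses only regularity, not this logical lacuna: even granting that the homotopy operator is bounded, it does not send a bivector that merely induces the zero bracket to a primitive; it sends $d$-closed forms to potential primitives when the cohomology vanishes.

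The paper's route for the converse is different and avoids any Poincar\'e step: it tests the two brackets against \emph{cylindrical functionals} (functionals depending on only finitely many Fourier modes of $\gamma$), and asserts that this family is rich enough to \emph{detect} the class in $H^{2}_{\mathrm{tot}}$. Thus if the brackets agree on all of $\mathcal F$ (in particular on cylindrical functionals), the classes must coincide. This is a separating/detecting argument rather than a homotopy argument, and it is what should replace your second implication.
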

\begin{proof}
If \([\Pi]=[\widetilde\Pi]\) then \(\widetilde\Pi-\Pi=d\Upsilon\) for some admissible
vertical 1‑vector \(\Upsilon\).  The change of Poisson operator amounts to adding the
coboundary term \(\langle d\Upsilon\,\delta G,\delta F\rangle\), which vanishes after
integration by parts (the duality pairing is skew‑symmetric).  Hence the brackets
agree.  The converse follows by testing the brackets against cylindrical
functionals (functionals depending only on finitely many Fourier modes), which
detect the cohomology class.
\end{proof}

\section{Applications to classical integrable systems}
\label{sec:applications}

In this section we illustrate how the Hamiltonian formalisms of several
well‑known integrable PDEs fit naturally into the weak Sobolev framework
developed above.  Throughout we identify a periodic field \(u:\mathbb S^{1}\to\mathbb R^{m}\)
with a loop \(\gamma\in W^{s,p}(\mathbb S^{1};\mathbb R^{m})\).
Local Hamiltonians of an integrable PDE have the form
\[
H(\gamma)=\int_{\mathbb S^1}\! \mathcal H\big(\gamma,\partial_x\gamma,\dots,\partial_x^N\gamma\big)\,dx,
\]
with $\mathcal H$ differential-polynomial or smooth in its arguments.
In our $W^{s,p}$ setting ($0<s\le\tfrac12$, $1<p<\infty$) we interpret $\partial_x$
as the distributional derivative $\partial_t$ and \emph{restrict} to densities
built from bounded operators ($\partial_t$, $\partial_t^{-1}$, $|D|^\alpha$, $\mathcal H$),
so that $\delta H(\gamma)$ lies in the correct dual Sobolev class (Def.~\ref{def:admissible}).
\subsection{KdV}
The Korteweg–de Vries equation
\[
u_{t}=6uu_{x}-u_{xxx},\qquad x\in\mathbb S^{1},
\]
possesses two compatible Poisson structures
\[
\mathcal P_{0}=\partial_{x},\qquad
\mathcal P_{1}=\partial_{x}^{3}+2u\,\partial_{x}+u_{x}.
\]
The first operator \(\mathcal P_{0}\) is of the local hydrodynamic type
\(\mathcal P_{0}=J_{0}\partial_{x}\) with \(J_{0}\equiv1\).  Hence, by
Theorem~\ref{th:local-P}, \(\mathcal P_{0}\) defines a Poisson bracket on
\(W^{s,p}\) for any \(0<s\le\tfrac12\), \(1<p<\infty\).  The second operator
\(\mathcal P_{1}\) is of order three; to realise it one needs at least
\(H^{1}\) regularity, which lies beyond the minimal Sobolev range considered
here.  Nevertheless, the \emph{recursion operator}
\[
\mathcal R:=\mathcal P_{1}\mathcal P_{0}^{-1}
   =\partial_{x}^{2}+2u+u_{x}\partial_{x}^{-1}
\]
is weakly non‑local.  Its non‑local term involves \(\partial_{x}^{-1}\), which
coincides with \(\partial_{t}^{-1}=\mathcal H|D|^{-1}\) on the circle.
Therefore \(\mathcal R\) fits the framework of Theorem~\ref{th:nonlocal-P},
and the whole KdV hierarchy can be interpreted on \(W^{s,p}\) via the
weakly non‑local Poisson operators.

\subsection{Nonlinear Schrödinger (NLS)}
Writing \(\psi=q_{1}+iq_{2}\) and \(\gamma=(q_{1},q_{2})\) the focusing NLS
\[
i\psi_{t}+\psi_{xx}+2|\psi|^{2}\psi=0
\]
has the Hamiltonian structure
\[
\mathcal P_{\mathrm{NLS}}=J_{0}\partial_{x},
\qquad
J_{0}=
\begin{pmatrix}0&1\\-1&0\end{pmatrix}.
\]
Since \(J_{0}\) is constant, Theorem~\ref{th:local-P} applies directly and yields a
well‑defined Poisson bracket on any \(W^{s,p}\) with \(0<s\le\tfrac12\).  The
energy functional
\[
H_{\mathrm{NLS}}(\gamma)
   =\frac12\int_{\mathbb S^{1}}|\partial_{x}\gamma|^{2}
      -\frac12\int_{\mathbb S^{1}}|\gamma|^{4}
\]
belongs to \(\mathcal F\), and the Hamiltonian flow generated by \(\mathcal
P_{\mathrm{NLS}}\) reproduces the NLS equation in the weak Sobolev setting.

\subsection{Camassa–Holm}
The Camassa–Holm equation
\[
m_{t}+um_{x}+2u_{x}m=0,\qquad m=u-u_{xx},
\]
is bi‑Hamiltonian with the pair
\[
\mathcal P_{0}=\partial_{x}-\partial_{x}^{3},
\qquad
\mathcal P_{1}=m\partial_{x}+\partial_{x}m .
\]
Both operators contain the inverse Helmholtz operator \((1-\partial_{x}^{2})^{-1}\),
which is a Fourier multiplier of order \(-2\) and therefore bounded on
\(W^{s,p}\) for any \(s\ge0\).  Moreover, \(\partial_{x}^{-1}\) appears explicitly
in the weakly non‑local formulation of \(\mathcal P_{1}\).  Consequently,
Theorem~\ref{th:nonlocal-P} guarantees that the Camassa–Holm Poisson
structures are well defined on \(W^{s,p}\) with \(0<s\le\tfrac12\), \(1<p<\infty\).

\subsection{Dubrovin–Novikov hydrodynamic systems}
Consider a system of hydrodynamic type
\[
u^{i}_{t}=v^{i}_{j}(u)\,u^{j}_{x},\qquad i=1,\dots,m,
\]
with a Dubrovin–Novikov Poisson bracket
\[
\{u^{i}(x),u^{j}(y)\}
   =g^{ij}(u)\,\delta'(x-y)
     +b^{ij}_{k}(u)\,u^{k}_{x}\,\delta(x-y).
\]
In the loop‑space language this corresponds to the operator
\[
\mathcal P^{ij}=g^{ij}(\gamma)\,\partial_{t}
                +b^{ij}_{k}(\gamma)\,\partial_{t}\gamma^{k}.
\]
If the metric \(g^{ij}\) and the connection coefficients \(b^{ij}_{k}\) are
\(C^{0,1}\) functions of \(\gamma\), then \(\mathcal P\) belongs to the class of
local Mokhov operators (Theorem~\ref{th:local-P}).  The classical algebraic
conditions for the Jacobi identity (flatness of \(g\) and compatibility of
\(b\)) translate into the cocycle condition \(d\Pi=0\) in the total bicomplex,
so the Poisson bracket is valid on \(W^{s,p}\).

\subsection{ Summary table – Poisson structures for integrable systems }

Classical integrable systems on the circle embed into the $W^{s,p}$
Mokhov framework as follows:
\begin{itemize}
\item \emph{Local order-1} Poisson brackets (NLS/AKNS, hydrodynamic type)
are exactly the constant or state-dependent $J(\gamma)\partial_t$ brackets.
\item \emph{Recursion operators} and many bi-Hamiltonian pairs introduce
\emph{weakly nonlocal} pieces (\(\partial_t^{-1}\)).
\item The \emph{bicomplex} ($d_H,d_V$) interprets Jacobi/compatibility as
$d$-closedness classes in $H^2_{\rm tot}$, recovering Magri's scheme in
the weak Sobolev setting.
\end{itemize}
In all cases above, the \emph{bi-Hamiltonian} property ($\mathcal P_0,\mathcal P_1$ compatible)
is equivalent to the $d$-closure of the corresponding total bivectors in the variational
bicomplex. The generating sequence of Hamiltonians
$\{H_n\}$ in Magri's scheme satisfies
\[
\mathcal P_1\,\delta H_n = \mathcal P_0\,\delta H_{n+1},
\]
which, in our $W^{s,p}$ setting, is well posed as an identity in $W^{-s,p'}$
and propagates regularity thanks to the boundedness of $\partial_t^{\pm1}$,
$\mathcal H$, and multipliers $C^{0,1}$.

\begin{table}[h!]
\centering
\renewcommand{\arraystretch}{1.35}   
\setlength{\tabcolsep}{3pt}         
\begin{tabular}{|>{\raggedright\arraybackslash}p{3.2cm}
                |>{\centering\arraybackslash}p{5.8cm}
                |>{\centering\arraybackslash}p{2.8cm}
                |>{\centering\arraybackslash}p{3.2cm}|}
\hline
\textbf{System} &
\textbf{Poisson operator(s)} &
\textbf{Type} &
\textbf{Regularity}\\
\hline
\textbf{KdV} &
\(\displaystyle
   \mathcal P_{0}= \partial_{x},
   \)
             &
local (order 1) +   &
\(0<s\le \tfrac12,\)  \\
&\( \mathcal R:=\mathcal P_{1}\mathcal P_{0}^{-1} \)& weakly non‑local &\(1<p<\infty\)\\
&\(=\partial_{x}^{2}+2u+u_{x}\partial_{x}^{-1}
\)& (rank 1) &\\
\hline
\textbf{NLS / AKNS} &
\(\displaystyle
   \mathcal P = J_{0}\,\partial_{x},
   \)
  &
local (order 1) &
\(0<s\le \tfrac12,\) \\
& \(J_{0}= \begin{pmatrix}0&1\\-1&0\end{pmatrix}
\) && \(1<p<\infty\)\\
\hline
\textbf{Camassa–Holm} &
\(\displaystyle
   \mathcal P_{0}= \partial_{x}-\partial_{x}^{3},\)
    &
mixed: order 1   &
\(0<s\le \tfrac12,\)  \\
& \(
   \mathcal P_{1}= m\,\partial_{x}+\partial_{x}m,\)&+ non‑local operator& \\
   & \(
   m=(1-\partial_{x}^{2})^{-1}u
\) &\((1-\partial_{x}^{2})^{-1}\)& \(1<p<\infty\)\\
\hline
\textbf{Dubrovin–Novikov} &

  \(\mathcal P^{ij}= g^{ij}(\gamma)\,\partial_{x} \)
                   &
local (order 1) &
\(0<s\le \tfrac12,\)  \\
&\(\displaystyle + b^{ij}_{k}(\gamma)\,\partial_{x}\gamma^{k}
\)&& \(1<p<\infty\)\\
\hline
\textbf{General Mokhov} &
\(\displaystyle
   \mathcal P(\gamma)\xi
   = J(\gamma)\,\partial_{x}\xi
\) &
local + weakly  &
\(0<s\le \tfrac12,\) \\
&\( \displaystyle +\sum_{r=1}^{R}
        A_{r}(\gamma)\,\partial_{x}^{-1}
        \bigl(B_{r}(\gamma)^{\!\top}\xi\bigr) \)& non‑local (rank 1) & \(\;1<p<\infty\)\\
\hline
\end{tabular}
\caption{
\textbf{Embedding of classical integrable‑system Poisson structures into the
\(W^{s,p}\) framework.}
Order‑1 operators (and their weakly non‑local deformations) are well defined for
any Sobolev regularity \(0<s\le\tfrac12\) and any exponent \(1<p<\infty\).
Higher‑order local operators (e.g. the second KdV bracket) require stronger
regularity (typically \(H^{1}\) or \(W^{1,p}\)).  All matrix‑valued
coefficients \(J,\,A_{r},\,B_{r},\,g^{ij},\,b^{ij}_{k}\) are assumed
Lipschitz‑bounded (\(C^{0,1}_{b}\)).}
\label{tab:summary}
\end{table}

\newpage
\appendix
\section{Analytic background on \(W^{s,p}\)}\label{app:analytic}

For completeness we collect the analytic facts repeatedly used in the text.

\subsection{Boundedness of the Hilbert transform}
For \(0\le s\le1\) and \(1<p<\infty\) the Hilbert transform
\(\mathcal H\) is a bounded linear operator on \(W^{s,p}(\mathbb S^{1})\)
(see \cite[Th.~III.3]{SteinSI}).

\subsection{Fractional derivatives}
For \(0\le s\le1\) and \(1<p<\infty\) the Fourier multiplier
\(|D|^{s}\) defines an isomorphism
\[
|D|^{s}:W^{\sigma,p}(\mathbb S^{1})\longrightarrow
          W^{\sigma-s,p}(\mathbb S^{1}),
\qquad
|D|^{-s}:(W^{\sigma-s,p})\to W^{\sigma,p}.
\]

\subsection{Inverse derivative}
We set
\[
\partial_{t}^{-1}:=\mathcal H|D|^{-1},
\]
which is bounded
\(W^{s,p}\to W^{1+s,p}\hookrightarrow W^{s,p}\) for
\(0<s\le\tfrac12\), \(1<p<\infty\).

\subsection{Nemytskiĭ operators}
\begin{proposition}\label{prop:nemytskii}
If \(a:\mathbb R^{m}\to\mathbb R\) is Lipschitz and bounded, then the
operator \(u\mapsto a\circ u\) maps \(W^{s,p}(\mathbb S^{1};\mathbb R^{m})\)
continuously into itself for any \(0<s\le1\), \(1<p<\infty\).
\end{proposition}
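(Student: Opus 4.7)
The plan is to exploit the Gagliardo intrinsic definition of \(W^{s,p}(\mathbb S^{1})\) for \(0<s<1\), which makes the Lipschitz bound act directly on the integrand, and to treat the endpoint \(s=1\) separately via the Stampacchia chain rule. The argument splits cleanly into two independent steps: a one-line boundedness estimate, and a more delicate continuity argument based on extraction of a pointwise-convergent subsequence combined with a generalized dominated convergence theorem.

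For \emph{boundedness}, I would start from the Gagliardo seminorm and apply the pointwise Lipschitz inequality \(|a(u(t))-a(u(\tau))|\le L\,|u(t)-u(\tau)|\) inside the double integral, obtaining
\[
[a\circ u]_{W^{s,p}}^{p}
=\iint_{\mathbb S^{1}\times\mathbb S^{1}}\frac{|a(u(t))-a(u(\tau))|^{p}}{|e^{it}-e^{i\tau}|^{1+sp}}\,dt\,d\tau
\le L^{p}\,[u]_{W^{s,p}}^{p}.
\]
The \(L^{p}\) part is controlled by \(\|a\circ u\|_{L^{p}}\le \|a\|_{\infty}|\mathbb S^{1}|^{1/p}\), yielding \(\|a\circ u\|_{W^{s,p}}\le \|a\|_{\infty}|\mathbb S^{1}|^{1/p}+L\,[u]_{W^{s,p}}\). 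For \(s=1\) one uses instead that the weak derivative satisfies \((a\circ u)'=(Da)(u)\,u'\) a.e.\ (Stampacchia), with \(|Da|\le L\) almost everywhere by Rademacher's theorem, so that \(\|(a\circ u)'\|_{L^{p}}\le L\,\|u'\|_{L^{p}}\).

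For \emph{continuity}, assume \(u_{n}\to u\) in \(W^{s,p}\). The \(L^{p}\) part is immediate: the Lipschitz bound gives \(\|a\circ u_{n}-a\circ u\|_{L^{p}}\le L\,\|u_{n}-u\|_{L^{p}}\to 0\). For the seminorm part, I would extract (not relabeled) a subsequence along which \(u_{n}\to u\) almost everywhere on \(\mathbb S^{1}\) and along which the weighted integrands
\(\varphi_{n}(t,\tau):=|u_{n}(t)-u_{n}(\tau)|^{p}/|e^{it}-e^{i\tau}|^{1+sp}\)
converge both pointwise and in \(L^{1}(\mathbb S^{1}\times\mathbb S^{1})\) to their analogue \(\varphi(t,\tau)\) built from \(u\). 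The corresponding integrand for \([a\circ u_{n}-a\circ u]_{W^{s,p}}^{p}\) converges pointwise a.e.\ to \(0\) by continuity of \(a\) and is dominated by \(2^{p-1}L^{p}(\varphi_{n}+\varphi)\). Applying the generalized dominated convergence theorem (if \(|f_{n}|\le g_{n}\), \(g_{n}\to g\) in \(L^{1}\), and \(f_{n}\to f\) pointwise a.e., then \(f_{n}\to f\) in \(L^{1}\)) yields vanishing of the seminorm difference along the subsequence, and a standard subsequence-of-subsequence argument upgrades this to convergence of the full sequence.

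The \emph{main obstacle} is the seminorm continuity step: the natural pointwise bound on the integrand involves the \(n\)-dependent quantity \(|u_{n}(t)-u_{n}(\tau)|^{p}\), so ordinary dominated convergence cannot be applied with a fixed majorant. This is exactly where the generalized (equi-integrable) version of dominated convergence is used, and its hypotheses are supplied for free by the assumed convergence \(u_{n}\to u\) in the Gagliardo seminorm. For the endpoint \(s=1\) the parallel difficulty appears when passing to the limit in \((Da)(u_{n})u_{n}'\), since \(Da\) is only defined almost everywhere; this is resolved by the classical Marcus–Mizel argument, again based on pointwise-subsequence extraction and dominated convergence in \(L^{p}\). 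All intermediate steps rely only on the Gagliardo definition and elementary measure theory, so no additional analytic machinery beyond what is already collected in Appendix~\ref{app:analytic} is needed.
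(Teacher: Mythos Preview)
Your argument is correct and complete. The paper itself does not give a proof at all: its ``proof'' consists solely of the reference ``See Runst--Sickel \cite[Ch.~4]{RunstSickel} or Triebel \cite[Th.~2.5.12]{Triebel}.'' Your approach is therefore genuinely different in that it is self-contained: you work directly with the Gagliardo double integral, use the pointwise Lipschitz bound for the boundedness estimate, and for continuity you invoke the generalized (equi-integrable) dominated convergence theorem after extracting an a.e.-convergent subsequence---the key observation being that $[u_n]_{W^{s,p}}\to[u]_{W^{s,p}}$ together with a.e.\ convergence forces the Gagliardo integrands $\varphi_n$ to converge in $L^1$ (Scheff\'e), which supplies the moving majorant. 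The cited references proceed instead through the Littlewood--Paley/Triebel--Lizorkin characterisation and general composition theorems, which cover a much wider range of nonlinearities and indices but require substantially more machinery. Your elementary route is exactly tailored to the Lipschitz hypothesis and to $0<s\le 1$, and has the advantage of being fully readable without external sources; the paper's route has the advantage of pointing to results that also handle, e.g., $C^{k,\alpha}$ composition on higher-order spaces, should that ever be needed.
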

\begin{proof}
See Runst–Sickel \cite[Ch.~4]{RunstSickel} or Triebel \cite[Th.~2.5.12]{Triebel}.
\end{proof}

\subsection{Duality}
For \(1<p<\infty\) the dual space of \(W^{s,p}(\mathbb S^{1})\) is
\(W^{-s,p'}(\mathbb S^{1})\) with the pairing
\(\langle\xi,h\rangle_{W^{-s,p},\,W^{s,p'}}\).
If \(\xi\in W^{-s,p}\) and \(h\in W^{s,p'}\) then
\(|\langle\xi,h\rangle|\le\|\xi\|_{W^{-s,p}}\|h\|_{W^{s,p'}}\).

\vskip 12pt

\paragraph{\bf Acknowledgements:} J.-P.M  thanks the France 2030 framework programme Centre Henri Lebesgue ANR-11-LABX-0020-01 
for creating an attractive mathematical environment.

\vskip 12pt

\paragraph{\bf Declaration of generative AI and AI-assisted technologies in the writing process}

During the preparation of this work the author used ChatGPT and Mistral AI in order to smoothen the expression in English. After using this tool/service, the author reviewed and edited the content as needed and takes full responsibility for the content of the publication.

\vskip 12pt


\end{document}